\title{Minimax Option Pricing Meets Black-Scholes in the Limit}
\author{
Jacob Abernethy\\
Computer Science Division\\
University of California, Berkeley\\
\texttt{\small jake@cs.berkeley.edu}
\and
Rafael M.\ Frongillo\\
Computer Science Division\\
University of California, Berkeley\\
\texttt{\small raf@cs.berkeley.edu}
\and
Andre Wibisono\\
Computer Science Division\\
University of California, Berkeley\\
\texttt{\small wibisono@cs.berkeley.edu}
}
\date{}
\newcommand{\Comments}{1}
\newcommand{\note}[2]{\ifnum\Comments=1\textcolor{#1}{#2}\fi}
\def\F{\mathcal{F}}
\def\A{\mathcal{A}}
\def\C{\mathcal{C}}
\def\E{\mathbb{E}}
\def\Eop{\mathop{\mathbb{E}}}
\def\reals{\mathbb{R}}
\def\R{\mathbb{R}}
\def\N{\mathbb{N}}
\def\Var{\text{Var}}
\def\C{\mathcal{C}}
\def\X{\mathcal{X}}
\def\P{\mathbb{P}}
\def\S{\mathcal{S}}
\def\zb{\boldsymbol{\zeta}}
\newcommand{\convto}[1]{\stackrel{#1}{\longrightarrow}}
\newcommand{\ZC}{\text{ZC$_n$}}   
\newcommand{\CVC}{\text{CVC}} 
\newcommand{\DVC}{\text{DVC}} 
\newcommand{\mg}{\textnormal{mg}}
\newtheorem{lemma}{Lemma}
\newtheorem{theorem}{Theorem}
\newtheorem{definition}{Definition}
\theoremstyle{definition}
\begin{document}

\maketitle

\begin{abstract}
  Option contracts are a type of financial derivative that allow investors to hedge risk and speculate on the variation of an asset's future market price. In short, an option has a particular payout that is based on the market price for an asset on a given date in the future. In 1973, Black and Scholes proposed a valuation model for options that essentially estimates the tail risk of the asset price under the assumption that the price will fluctuate according to geometric Brownian motion. More recently, DeMarzo et al., among others, have proposed more robust valuation schemes, where we can even assume an adversary chooses the price fluctuations. This framework can be considered as a sequential two-player zero-sum game between the investor and Nature. We analyze the value of this game in the limit, where the investor can trade at smaller and smaller time intervals. Under weak assumptions on the actions of Nature (an adversary), we show that the minimax option price asymptotically approaches exactly the Black-Scholes valuation. The key piece of our analysis is showing that Nature's minimax optimal dual strategy converges to geometric Brownian motion in the limit.
\end{abstract}

\newpage

\section{Introduction}

In finance, an \emph{option} is a financial contract between two parties that guarantees the purchase or sale of a given asset, such as a stock or bond, at a specified price in the future. Buying an option means buying the right to engage in a particular transaction, yet the the buyer has no obligation to do so. Options are a useful tool for controlling risk in financial portfolios, as they can be used to hedge against the possibility of a large unexpected price fluctuation.

Let us focus presently on a \emph{European call option}, parameterized by an asset $A$, a \emph{strike price} $K$ and a future \emph{expiration date} $T$. The buyer of an $(A,K,T)$ call option is allowed to purchase 1 share of asset $A$ for a fixed price of $K$ on date $T$ if she so chooses. Of course, if the market price of $A$ on date $T$ is $P$, then the owner of the call option will only exercise the transaction if $P > K$. Thus, in practice, the value of the option on date $T$ is $\max(0, P - K)$.

What has remained a popular topic in finance is the problem of pricing options in terms of known properties of the underlying asset and distributional properties of its price fluctuations. The most well-known approach for pricing options is the \emph{Black-Scholes} model, introduced in 1973 by Fischer Black and Myron Scholes in their seminal paper ``The pricing of options and corporate liabilities'' \cite{black1973pricing}. As the future market price of an asset is an uncertain quantity, the Black-Scholes pricing model includes a key assumption, that an asset's price fluctuates with constant drift and volatility, which leads to a \emph{geometric Brownian motion} (GBM) model for the price path. Thus, given a European call option $(A,K,T)$, we can estimate the ``fair value'' of the option to be its expected payoff under the assumption that the price path $P(t)$ behaves according to GBM. That is, the Black-Scholes model would set the option price to be $\E_{P \sim \text{GBM}} [\max(0, P(T) - K)]$. 

The Black-Scholes model has undergone a reasonable amount of criticism, much of which is due to the GBM characterization of the underlying asset price, a model which heavily discounts tail risk. Extreme price changes, often due to systemic events, are estimated to be highly improbable under Black-Scholes but occur quite frequently in practice -- this valuation model does not hold for alternative stochastic models.

There has been recent work that takes an entirely different approach to pricing options, namely where no stochastic assumptions are made about the fluctuation of the underlying asset. This robust option pricing framework, put forward by DeMarzo et al.\ \cite{demarzo2006online} (with a similar model presented by Shafer and Vovk \cite{shafer2001probability}) imagines a multi-stage game between an investor and Nature. The investor chooses an amount of money to invest in the underlying asset, and Nature chooses how the asset's price should change from round to round. The investor would like to exhibit a trading strategy (algorithm) with the ultimate goal of earning almost as much as the payout of the option \emph{against a worst-case price path}. Let us imagine, for the moment, that we can construct a strategy which receives a payout that is never more than $C$ dollars worse than the option payout. DeMarzo et al.\ make the key observation that, since this guarantee holds for any price path (within constraints), then the price of the option should cost no more than $C$ at the start of the game. The reason for this is simple: if the price of the option were strictly larger than $C$, then the investor has an arbitrage opportunity via short selling the option and going long on his robust trading strategy.

In the present paper, we look at this sequential game between investor and Nature and analyze the equilibrium strategies of each player. But we go a step further and consider what is the limit behavior of the game when the investor trades at greater and greater frequency. Intuitively, this is what would happen when a firm switches from a trading strategy that trades once per day, to one that trades once per hour, to once per minute, etc. The question we aim to answer is: what is the value of this game, i.e.\ the minimax option price, as the trading frequency approaches infinity? The answer is somewhat surprising: the option value, in the limit, is \emph{identical to the price under the Black-Scholes model.} We show this by proving that the worst case price path chosen by Nature, under reasonable constraints, will converge to geometric Brownian motion.

Recall that the Black-Scholes option pricing model is effectively about determining a ``fair price'' for the derivative, under reasonably strong distributional constraints. The robust option pricing framework of DeMarzo et al., on the other hand, only aims to exhibit an upper bound on the option value when Nature sets the market price of the asset under some constraints on the price fluctuations. But we show that the worst-case price is identically the fair price under the GBM assumption. In a certain sense this provides reasonably strong validation for Black-Scholes.

We make two observations about the results.
\begin{enumerate}
  \item While our running an example is a European call option, our main theorem holds for any financial derivative whose payoff is a convex function of the underlying asset price on the expiration date. This is a broad class of derivatives, including both the European put and call options.
  \item We do not provide an explicit optimal strategy for the investor. We begin by considering a game where first the investor commits to an algorithm and then Nature responds with a randomized price path, but the majority of our analysis concerns the dual of this game in which Nature must act first, and then the investor can act with full knowledge of Nature's randomized price path. We leave as an open question whether such a strategy can be efficiently constructed.
  \item Our analysis requires that the investor can trade arbitrarily large amounts of money to compete with the option payout. In a sense, we are assuming that the investor is a large institution with the captial to go very long and very short on the asset. We believe that the result should approximately hold when the investor has a fixed budget size, but we leave this as an open question. 
\end{enumerate}

\subsection*{Previous Work}

The primary motivation for our results draws from DeMarzo et al.\ \cite{demarzo2006online}. This work focused primarily on constructing robust trading strategies that can be used instead of purchasing an option, and the authors were the first to observe that the existence of such a strategy provides a theoretical upper bound on the price one should pay for an option. Their results draw strongly from ideas developed within the theory of \emph{regret minimization} in online learning, and the proposed algorithm has a resemblance to the well-known Multiplicative Weights algorithm, proposed in various forms by, among others, Littlestone and Warmuth \cite{LitWar94} and Freund and Schapire \cite{freund1995desicion}. The connection to regret minimization strategies is unsurprising, since the primary goal of this area of research is to provide guarantees that hold without requiring stochasticity assumptions on the data received from Nature.

The work of DeMarzo et al.\ looks briefly at the minimax formulation of the option pricing game, although for a fixed trading frequency and using stronger assumptions than what is required in our results. Their analysis requires using Sion's minimax theorem \cite{sion1958general} and they conclude that Nature's optimal price path must be a martingale; both of these components are used herein. Finally, the authors provide a plot (Figure 1 in Section 7 in \cite{demarzo2006online}) showing the computed minimax option price versus the Black-Scholes price as a function of the strike price. This plot strongly foreshadows our main result in this paper as the two curves are very close to one another, and our aim is to show that these curves are asymptotically equal.

The other very related work is the book by Shafer and Vovk \cite{shafer2001probability} titled ``Probability and finance: it's only a game!'' The authors considered a number of game-theoretic interpretations of problems in finance. The authors look at pricing options under a game-theoretic framework (introduced originally by Vovk \cite{vovk1995pricing}) in which an investor must make a sequence of trades with an underlying asset as Nature sets the asset's market price. They arrive at very similar conclusions to ours, establishing that the ``fair price'' under their model is identical to the Black-Scholes valuation. However, the framework they put forward is more akin to the original derivation of the Black-Scholes pricing model, and differs from the framework of DeMarzo et al.\ in that they do not aim for worst-case bounds. In particular, their analysis requires the existence of a hypothetical derivative which pays off according to the fluctuation of the underlying asset.

There is a significant amount of discussion and analysis of sequential minimax games in the book of Cesa-Bianchi and Lugosi \cite{cesa-bianchi_prediction_2006}. A full duality-based approach to analyzing repeated games can be found in Abernethy et al.\ \cite{abernethy2009stochastic}, who also utilize martingale concentration. There has been more recent and very interesting work on pricing more exotic options by Gofer and Mansour \cite{gofer2011pricing,gofer2011regret}.

\section{The Minimax Option Game}

\paragraph{Notation.} Let $\R_0 = [0,\infty)$, $\R_+ = (0,\infty)$, and $\N = \{1,2,3,\dots\}$. Throughout, $m,n$ index $\N$ and $t$ indexes $\R$, so for instance, $0 \leq m \leq n$ means $m \in \{0,1,\dots,n\}$ and $0 \leq t \leq 1$ means $t \in [0,1]$.

\subsection{Problem setting and definitions.}
We shall be considering the value of options (or other derivative contracts) providing a certain payoff that depends on (a) the expiration date $T$, (b) a specified ``strike price'' $K$ and (c) the price $X(T)$ of some underlying asset at time $T$. We can assume that the current price $X(0)$ is known in advance, hence we let $X(0) = 1$ always. Since we consider trading in continuous time, we shall assume without loss of generality that the option expires at $T = 1$. We shall consider, for each $n \in \N$, a sequential game where an investor trades $n$ times throughout the time interval $[0,1]$, i.e.\ we trade at every $1/n$-time interval.

Let us now discuss the payoff of the option, which will be denoted by a function $g \colon \R_0 \to \R_0$ whose input is the asset price at the time of expiration. For example, in the case of the European call option the payoff function is $g(x) = \max(0,x-K)$ where $K \geq 0$ is the strike price. We do not restrict our attention to derivatives of this form, and instead we require only that $g$ is convex and $L$-Lipschitz; namely, $|g(y)-g(x)| \leq L|y-x|$ for all $x,y \in \R_0$. However, for simplicity, we use the term ``option'' throughout the paper.

In the option pricing game we consider, we imagine that Nature chooses a randomized price path for the underlying asset, with the goal of maximizing the expected difference between the payoff of the option and the investor's earnings. We generally use the symbol $X$ to denote this price path. Let $\X$ denote the set of continuous stochastic processes $X \colon [0,1] \to \R_0$ with $X(0) = 1$, representing the asset prices. Within our analysis, we also consider a discrete variant of the continuous game, and imagine Nature choosing price paths defined by a finite sequence of random variables. For each $n \in \N$, let $\S^n$ denote the set of sequences of random variables $S_n = (S_{n,0}, S_{n,1}, \dots, S_{n,n})$ with $S_{n,0} = 1$ and $S_{n,m} \in \R_+$ for $0 \leq m \leq n$.

Given $X \in \X$, we will write $S_n(X) = (X(0), X(1/n), \dots, X(n/n)) \in \S^n$ to denote the discrete points of $X$.  Similarly, given $S_n\in\S^n$, we write $X(S_n)$ to denote the stochastic process $X \in \X$ obtained by linearly interpolating the values of $S_n$ in the log space, i.e.\ for any $t \in [0,1]$ of the form $t = (m + \alpha)/n$ with $m \in \{0,1,\dots,n-1\}$ and $0 \leq \alpha \leq 1$, we define
\begin{equation*}
\log X(t) = (1-\alpha) \: \log S_{n,m} + \alpha \: \log S_{n,m+1}.
\end{equation*}

Let $B \colon [0,1] \to \R$ denote the standard Brownian motion (also known as a Weiner Process) with $B(0) = 0$, and let $G \colon [0,1] \to \R_+$ denote the geometric Brownian motion (GBM) with drift $0$ and volatility $\sqrt{c}$,
\begin{equation} \label{eq:gbmdef}
G(t) = \exp\left(\sqrt{c} \: B(t) - \frac{ct}{2} \right).
\end{equation}
Observe that $G$ is a martingale since it has drift $0$. We consider $B$ and $G$ as (random) elements of $\C[0,1]$, the space of continuous functions from $[0,1]$ to $\R$
equipped with the supremum norm.

When the price path behaves according to GBM, we can define the Black-Scholes formulation for pricing an option defined by payoff function $g$:
\[
  (\text{Black-Scholes Price}) \quad \beta \quad := \quad \E[g(G(1))],
\]
where $G$ has the distribution defined as in \eqref{eq:gbmdef}, and the constant $c$ is assumed to be fixed. We observe one property of standard Brownian motion, which is that the value $B(t)$ conditioned on $B(t_0)$ for some $t_0 < t$ is normally distributed with mean $B(t_0)$ and variance $t - t_0$. Similarly, it is easy to see that
\begin{equation} \label{eq:gbmvar}
  \E\left[ (G(t) - G(t_0))^2 \, | \, \F_{t_0} \right] = G(t_0)^2 (\exp(c(t-t_0)) - 1)
\end{equation}
where $\F_s = \sigma(G(r) \colon 0 \leq r \leq s)$ is the filtration generated by $G$ up to time $s$.

We turn our attention to setting constraints for Nature's choice of price path $X \in \X$ -- it is very difficult to obtain any reasonable results unless the adversary is constrained in some fashion. In light of \eqref{eq:gbmvar}, its natural to require that the \emph{logarithm} of the price path not fluctuate too greatly.
\begin{definition}[\CVC]
We say that $X\in\X$ satisfies the {\em continuous variance constraint} (\CVC) if
\begin{equation} \label{eq:varbound}
\E[(X(t)-X(s))^2 \mid \F_s] \leq \left(\exp(c (t-s))-1\right) \: X(s)^2 \quad \text{a.s.\ for all } 0 \leq s \leq t \leq 1,
\end{equation} 
where $\F_s$ is the filtration generated by $X$ up to time $s$, and $c$ is a variance parameter.
\end{definition}
We will implicitly assume throughout that the variance parameter $c > 0$ is fixed. Observe that the geometric Brownian motion satisfies \CVC\ with equality. Since we shall also consider discrete price paths $S_n = (S_{n,0}, S_{n,1}, \dots, S_{n,n})$, we construct a similar constraint for this case.
\begin{definition}[\DVC]
For each $n \in \N$, we say $S_n \in \S^n$ satisfies the {\em discrete variance constraint} (\DVC) if
\begin{equation}\label{eq:dvcbound}
\E[(S_{n,m+1} - S_{n,m})^2  \mid \F_{n,m}] \leq (\exp(c/n) - 1) \: S_{n,m}^2 \quad \text{a.s.\ for all } m = 0,1,\dots,n-1,
\end{equation}
where now $\F_{n,m} = \sigma(S_{n,0}, \dots, S_{n,m})$.
\end{definition}
It is worth noting that, while the \CVC\ and \DVC\ apply to different spaces, the latter is essentially weaker than the former. Rather than require that the variance be controlled for every $s,t$ time interval, \DVC\ only specifies that the fluctuations between successive $1/n$ length time points are bounded. This fact is used in our proof.

We will need one more constraint on Nature's choice of price path. To describe this final restriction, let $\zb = (\zeta_1,\zeta_2,\dots)$ be a sequence of positive real numbers with the property that $\zeta_n \to 0$ and
\begin{equation*}
\liminf_{n \to \infty} \frac{n \zeta_n^2}{\log n} > 16c.
\end{equation*}
For example, we can take $\zeta_n = n^{-\frac{1}{2}+\delta}$ for any $0 < \delta < 1/2$. We will assume throughout the paper that $\zb$ is chosen and fixed, and satisfies the above properties.
\begin{definition}[\ZC]
For each $n \in \N$, we say that $S_n\in \S^n$ satisfies the {\em $\zeta_n$-constraint} (\ZC) if
\begin{equation} \label{eq:zc}
\left| \frac{S_{n,m+1}}{S_{n,m}} - 1 \right| \leq \zeta_n \quad \text{a.s.\ for all } m = 0,1,\dots,n-1. 
\end{equation}
Similarly, for each $n \in \N$ we say that $X\in \X$ satisfies the {\em $\zeta_n$-constraint} if $S_n(X)$ does.
\end{definition}

The \ZC\ constraint may appear strong, since we have a hard bound on what values the price ratios can take, but upon closer inspection one sees that this is a very weak constraint. It is required for our results because we require in various places that the prices lie in a compact set. However, the bounds we get are independent of the sequence $\zb$. And furthermore while the $\zeta_n$ must approach $0$, it can do so at an arbitrarily slow rate. Notice that \eqref{eq:dvcbound} is a similar constraint to that of \eqref{eq:zc}, where the former is ``soft'' and the latter ``hard'', but the constraint in \eqref{eq:dvcbound} shrinks at a rate of $\Theta(1/n)$ whereas the rate in \eqref{eq:zc} shrinks at a much slower rate. We will show that the addition of \ZC\ becomes negligible in the limit. While GBM violates \ZC, we show in \eqref{eq:probconv} that it does so with vanishing probability.

\begin{definition}
Define the following sets:
\begin{itemize}
  \item $\X_C = \{X\in \X: X \text{ satisfies \CVC}\}$.
  \item $\X^n_{C,\zb} = \{X\in \X_C: S_n(X) \text{ satisfies \ZC}\}$.
  \item $\S^n_D = \{S_n\in\S^n: S_n \text{ satisfies \DVC}\}$.
  \item $\S^n_{D,\zb} = \{S_n\in\S^n_D: S_n \text{ satisfies \ZC}\}$.
  \item $\S^n_{D,\zb,\mg} = \{S_n\in\S^n_{D,\zb}: S_n \text{ is a martingale with respect to the natural filtration } (\F_{m,n})\}$.
  \item $\S^n_{D=,\zb,\mg} = \{S_n\in\S^n_{D,\zb,\mg}: S_n \text{ satisfies \DVC\ with equality a.s. for all $m$}\}$.
\end{itemize}
\end{definition}
Note that we have the following relations:
\begin{eqnarray*}
&\X^n_{C,\zb} \subseteq \X_C \subseteq \X& \\
&\S^n_{D=,\zb,\mg} \subseteq \S^n_{D,\zb,\mg} \subseteq \S^n_{D,\zb} \subseteq \S^n_D \subseteq \S^n&
\end{eqnarray*}
Furthermore, observe that if $X \in \X_C$ then $S_n(X) \in \S^n_D$ for each $n \in \N$.

\subsection{The Option Pricing Game.}
We wish to analyze the zero-sum game between the investor, who buys and sells shares in an underlying asset, and Nature (the adversary), who chooses the path of the asset's market price over the time period. We shall assume that the game is parameterized by a value $n$ which determines the frequency of the investor's trades. More specifically, we'll allow the investor to adjust his investment after each $1/n$ interval of time.

\paragraph{Continuous Pricing Game:} 
\begin{itemize}
\item Investor's strategy $A \in \A_n$ is a tuple of functions $A_1, \ldots, A_n$ each having the form $A_m: X|_{[0,(m-1)/n]}\mapsto \Delta_m$ where $\Delta_m \in \reals$ and $X|_{[0,t]}$ is the stochastic process $X \in \X$ restricted to the range $[0,t]$. In other words, the investor will choose an amount of money $\Delta_m$ to invest in the underlying asset after having observed the price path up to time $(m-1)/n$. That is, $A$ can be interpreted as a sequence of random variables $(\Delta_1, \Delta_2,\dots,\Delta_n)$ where $\Delta_m$ is measurable with respect to $\F_{\frac{m-1}{n}}$, the filtration of the price path $X$ up to time $\frac{m-1}{n}$.
\item We imagine Nature's strategy as selecting a price path $X \in \X$ satisfying both \CVC\ and \ZC. More simply, we let Nature's strategy set be $\X^n_{C,\zb}$.
\item Assume now that the investor has committed to a strategy $A$ and Nature has committed to a price path $X$. At round $m$, the investor had invested $\Delta_m$ units of currency in the underlying asset previously, and the price fluctuated from $X((m-1)/n)$ to $X(m/n)$. Hence, in this round the investor has earned exactly $\left( \frac{X(m/n)}{X((m-1)/n)} - 1 \right)\Delta_m$.
\item Recall that the investor's goal is to construct a strategy that can compete with the payout of the option. Imagine, for example, that the investor had a strategy to buy and sell the underlying asset as he watches the price fluctuate, and that even in the worst case this strategy had a payout that was only $D$ dollars worse than the payout of the option. Then the investor would never pay more than a price of $D$ to purchase the option in question, for he could simple execute his successful strategy and never lose more than $D$. This argument is one of the key ideas in DeMarzo et al.\ \cite{demarzo2006online}.
\item With the previous observation in mind, we should design our option pricing game so as to determine the largest deviation between the option payout and the earnings of the investor when the investor is playing the optimal strategy and Nature is selecting a worst case price path. For a particular trading strategy $A \in \A_n$ and a price path $X \in \X$, we define the the loss
\[
  L_n(A,X) = \E_{X} \left[ g(X(1)) - 
    \sum_{m=1}^{n} \left( \frac{X(m/n)}{X((m-1)/n)} - 1 \right)\Delta_m \right].
\]
It will be convenient to consider what happens when the sum $\sum_{m=1}^{n} \left( \frac{X(m/n)}{X((m-1)/n)} - 1 \right) \Delta_m$ vanishes. For this case the investors strategy is irrelevant, hence we define
\[
  L_n(X) = L(X) := \E_{X} [g(X(1))].
\]
\end{itemize}
With the previous discussion in mind, we can now precisely define the quantity of interest. The central focus of the present work is to study the asymptotic value of the discussed game, which is exactly:
\begin{equation}
  \lim_{n \to \infty} \inf_{A\in\A_n} \sup_{X \in \X^n_{C,\zb}} L_n(A,X).
\end{equation}

\paragraph{Discrete Pricing Game} In order to analyze this game we will also consider a discrete version, in which Nature selects some random sequence $S_n \in \S^n_{D,\zb}$ instead of a full price path in $\X$. The set of algorithms $\A_n$ need not be redefined for this discrete game, as we shall assume that on some input $S$ an algorithm $A$ trades according to continuous price path $X(S_n)$ obtained by interpolation defined above. For a given $S_n \in \S^n_{D,\zb}$, we abuse notation somewhat by defining
\[
  L_n(A,S_n) := L_n(A, X(S_n)) \quad \text{ and } \quad L_n(S_n) := L_n(X(S_n)) = \E[g(S_{n,n})].
\]
This is perfectly natural since, in the continuous version, the loss function can be computed by only looking at the points $X(0), X(1/n), \ldots, X(n/n)$.

\section{The Main Result}
\label{sec:main-result}

We now state our main result and given the skeleton of the proof, with the more challenging lemmas saved for the appendix. The proof has a number of interesting ingredients, from an application of Sion's minimax theorem (Lemma~\ref{lem:minimax}), a version of the ``maximum principle'' for maximization of convex functions with random inputs (Lemma~\ref{lem:maxvar}), and a lower bound that requires analyzing Gaussian tails (Lemma~\ref{lem:lower-bound}). But the heaviest lifting is done in Theorem~\ref{Thm:Convergence}, which utilizes a key application of the Lindeberg--Feller Theorem for martingale convergence.

\begin{theorem}
  We have
  \[
    \lim_{n \to \infty} \inf_{A\in\A_n} \sup_{X \in \X^n_{C,\zb}} L_n(A,X) = \beta,
  \]
  where $\beta = \E[g(G(1))]$ is the Black-Scholes price.
\end{theorem}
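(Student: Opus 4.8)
The plan is to sandwich the game's value between a lower and an upper bound, each tending to $\beta$.

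\emph{Upper bound.} Since the loss and the discrete game are set up so that only the interpolated path $X(S_n(\cdot))$ matters, and since $X\in\X^n_{C,\zb}$ implies $S_n(X)\in\S^n_{D,\zb}$ (CVC is stronger than DVC on the grid, and ZC is inherited), one has $\inf_{A}\sup_{X\in\X^n_{C,\zb}}L_n(A,X)\le\inf_{A}\sup_{S_n\in\S^n_{D,\zb}}L_n(A,S_n)=:v_n$, so it suffices to bound the discrete value $v_n$. By Lemma~\ref{lem:minimax} (Sion's theorem, applied to the bilinear payoff on the convex investor set and the weak-$*$ compact convex set of laws of $S_n$) we get $v_n=\sup_{S_n\in\S^n_{D,\zb}}\inf_{A}L_n(A,S_n)$. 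For fixed $S_n$: if $\E[S_{n,m}-S_{n,m-1}\mid\F_{n,m-1}]\ne 0$ on a positive-probability event, the investor scales the single predictable trade $\Delta_m$ in the direction of this drift and drives $L_n(A,S_n)\to-\infty$ (the one place where we need that the investor may trade arbitrarily large amounts); and if $S_n$ is a martingale the trading term has zero expectation for every $A$, so $\inf_A L_n(A,S_n)=\E[g(S_{n,n})]$. Hence $v_n=\sup_{S_n\in\S^n_{D,\zb,\mg}}\E[g(S_{n,n})]$.

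Next, convexity of $g$ and a ``maximum principle'' (Lemma~\ref{lem:maxvar}) let us saturate the variance constraint at the optimum, reducing (up to an $o(1)$ if the supremum is only approached) to $v_n=\sup_{S_n\in\S^n_{D=,\zb,\mg}}\E[g(S_{n,n})]$; roughly, wherever DVC is slack one injects extra independent conditionally-mean-zero noise into an increment, which keeps the martingale and ZC properties and can only raise $\E[g(S_{n,n})]$ by Jensen. The crux is then Theorem~\ref{Thm:Convergence}: any sequence with $S_n\in\S^n_{D=,\zb,\mg}$, viewed through the log-linear interpolation as an element of $\C[0,1]$, converges in distribution to the GBM $G$. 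Granting this, $S_{n,n}\Rightarrow G(1)$; each $S_n$ is a martingale with $S_{n,0}=1$, so $\E[S_{n,n}]=1=\E[G(1)]$, and since $S_{n,n},G(1)\ge 0$ this forces $\{S_{n,n}\}_{n}$ to be uniformly integrable, whence $\E[g(S_{n,n})]\to\E[g(G(1))]=\beta$ using that $g$ is Lipschitz (hence continuous, of linear growth). A subsequence argument gives $\limsup_n v_n\le\beta$, so $\limsup_n\inf_{A}\sup_{X}L_n(A,X)\le\beta$.

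\emph{Lower bound.} Here I would exhibit, for each large $n$, a \emph{single} martingale price path $X^\dagger_n\in\X^n_{C,\zb}$ with $\E[g(X^\dagger_n(1))]\to\beta$; then $\inf_A\sup_{X\in\X^n_{C,\zb}}L_n(A,X)\ge\inf_A L_n(A,X^\dagger_n)=\E[g(X^\dagger_n(1))]$, the last equality since the trading term vanishes against a martingale. The GBM $G$ lies in $\X_C$ (with equality) and is a martingale; its only defect is that its grid increments violate ZC, but each grid log-increment is $N(-c/2n,\,c/n)$, so by the choice of $\zb$ the failure probability is at most $n\exp(-\Omega(n\zeta_n^2))\to 0$; see \eqref{eq:probconv}. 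Lemma~\ref{lem:lower-bound} performs the required surgery — capping the per-step log-increments and recentering to restore the martingale property and the variance bound, and interpolating between grid points by suitably rescaled Brownian bridges so that CVC holds for \emph{all} $s\le t$ — and uses Gaussian tail estimates to show the resulting $X^\dagger_n$ still satisfies $\E[g(X^\dagger_n(1))]\to\beta$. This gives $\liminf_n\inf_A\sup_X L_n(A,X)\ge\beta$, and combined with the upper bound proves the theorem.

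I expect the main obstacle to be Theorem~\ref{Thm:Convergence}. The delicate points are: (i) passing to the log-process, where $\log(1+u)=u-\tfrac{1}{2}u^2+O(u^3)$ splits each increment into a martingale part and a deterministic Itô drift accumulating to $-ct/2$; (ii) verifying the hypotheses of the Lindeberg--Feller martingale central limit theorem, namely the conditional Lindeberg condition and convergence of the sum of conditional variances of the centered log-increments (here to the \emph{deterministic} clock $ct$) — both resting on the ZC bound $|S_{n,m}/S_{n,m-1}-1|\le\zeta_n\to 0$, which makes increments uniformly small and, together with DVC holding \emph{with equality}, pins the second conditional moments; and (iii) upgrading one-dimensional convergence to convergence in $\C[0,1]$ and applying the continuous map $\exp$. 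A secondary obstacle is making the lower-bound truncation simultaneously respect CVC at all times (not just on the grid), ZC, and the exact martingale property.
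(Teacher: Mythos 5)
Your upper-bound chain matches the paper's almost exactly: Sion's minimax theorem, restriction to the discrete game, the martingale reduction (investor forces drift to zero, then becomes irrelevant), the maximum-principle step saturating DVC, and finally the functional CLT of Theorem~\ref{Thm:Convergence}. The one genuine variation is the last step, where you pass from $S^\ast_{n,n}\Rightarrow G(1)$ to $\E[g(S^\ast_{n,n})]\to\beta$ via the observation that $\E[S^\ast_{n,n}]=1=\E[G(1)]$, nonnegativity, and weak convergence imply uniform integrability (which, with linear growth of $g$, gives convergence of expectations). The paper instead computes the second moment exactly --- $\E[(S^\ast_{n,n})^2]=e^{c}$ from DVC holding with equality --- and combines Cauchy--Schwarz with Markov to show $\E[g_M(S^\ast_{n,n})]\to\E[g(S^\ast_{n,n})]$ uniformly in $n$. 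Your first-moment route is a bit more elegant and uses strictly less information, but both are sound. One small caution: your description of the maxvar step as ``injecting independent mean-zero noise'' is not quite what one wants, since adding noise to an increment already near $\pm\zeta_n$ can break \ZC; the paper instead \emph{mixes} the increment with the extremal two-point law $\pm\zeta_n$, which preserves \ZC\ automatically.

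The lower bound is where you genuinely diverge. You propose building $X^\dagger_n$ by capping each grid log-increment of $G$, recentering to restore the martingale property, adjusting to keep the variance bound, and stitching with rescaled Brownian bridges so \CVC\ holds off the grid. The paper's construction is far simpler: set $G_n=G$ on the event that $G$ already satisfies \ZC, and $G_n\equiv\mathbf{1}$ otherwise; then \eqref{eq:probconv} (a Gaussian tail bound under the assumption $\liminf n\zeta_n^2/\log n>16c$) shows the substitution happens with vanishing probability, and the Lipschitz property of $g$ plus Cauchy--Schwarz gives $\E[g(G_n(1))]\to\beta$. Your truncate-and-recenter plan is considerably heavier machinery and, as sketched, is not obviously closed under its own corrections (recentering a capped increment can re-violate the cap; Brownian-bridge interpolation does not automatically give \CVC\ for arbitrary $s\le t$ straddling grid points). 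Your instinct that preserving \CVC, \ZC, and the martingale property \emph{simultaneously} is the delicate part is well placed, but the paper sidesteps all of it with the binary replacement. I'd recommend adopting the paper's construction for the lower bound unless you can flesh out the truncation argument completely.
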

\begin{proof}
Observe that for sufficiently large $n \in \N$,
\begin{align*}
\inf_{A\in\A_n} \sup_{X \in \X^n_{C,\zb}} L_n(A,X)
&= \sup_{X \in \X^n_{C,\zb}} \inf_{A\in\A_n} L_n(A,X) & \text{ by Lemma~\ref{lem:minimax}} \\
&\leq \sup_{S_n \in \S^n_{D,\zb}} \inf_{A\in\A_n} L_n(A,S_n) & \text{ since $S_n(\X^n_{C,\zb}) \subseteq \S^n_{D,\zb}$} \\
&= \sup_{S_n \in \S^n_{D,\zb,\mg}} L_n(S_n) & \text{ by Lemma~\ref{lem:mtg}} \\
&= L_n(S_n^\ast) & \text{ by Lemma~\ref{lem:maxvar}}
\end{align*}
where $S_n^\ast$ is an element of $\S^n_{D=,\zb,\mg}$ that achieves the $\sup_{\S^n_{D,\zb,\mg}}$ (the existence of which is proven in Lemma~\ref{lem:maxvar}). By letting $n \to \infty$ and using Lemma~\ref{lem:upper-bound} we obtain
\begin{equation*}
\limsup_{n \to \infty} \inf_{A\in\A_n} \sup_{X \in \X^n_{C,\zb}} L_n(A,X)
\leq \lim_{n \to \infty} L_n(S_n^\ast)
= \beta.
\end{equation*}
On the other hand, by Lemma~\ref{lem:lower-bound} we also know that
\begin{equation*}
\beta \leq \liminf_{n \to \infty} \inf_{A\in\A_n} \sup_{X \in \X^n_{C,\zb}} L_n(A,X).
\end{equation*}
Hence we conclude that
\begin{equation*}
\lim_{n \to \infty} \inf_{A\in\A_n} \sup_{X \in \X^n_{C,\zb}} L_n(A,X) = \beta.
\end{equation*}
\end{proof}

We now proceed to establish the necessary lemmas for the above proof. The first lemma states that we have minimax duality for our game.

\begin{lemma}\label{lem:minimax}
  For every $n$ we have
  \[
    \inf_{A\in\A_n} \sup_{X \in \X^n_{C,\zb}} L_n(A,X) = 
      \sup_{X \in \X^n_{C,\zb}} \inf_{A\in\A_n} L_n(A,X).
  \]
\end{lemma}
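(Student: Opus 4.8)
The plan is to invoke Sion's minimax theorem, so the work is entirely in verifying its hypotheses: we need (i) a topological vector space structure on each player's strategy set, (ii) convexity/concavity of $L_n$ in the appropriate arguments, (iii) compactness of one player's set, and (iv) the relevant semicontinuity of $L_n$. The subtlety is that the investor's ``strategy'' $A\in\A_n$ is a tuple of measurable functions of the price path, which is an awkward object to topologize directly; the right move is to observe that, for a \emph{fixed} price path $X$, all that matters about $A$ is the tuple of realized investments $(\Delta_1,\dots,\Delta_n)$, and more precisely the conditional expectations, so the game reduces to one over a Euclidean-like parameter space.

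First I would rewrite $L_n(A,X)$ by pushing the expectation inside: since $\Delta_m$ is $\F_{(m-1)/n}$-measurable,
\[
\E_X\!\left[\Bigl(\tfrac{X(m/n)}{X((m-1)/n)}-1\Bigr)\Delta_m\right]
= \E_X\!\left[\Delta_m \,\E_X\!\Bigl[\tfrac{X(m/n)}{X((m-1)/n)}-1 \,\Big|\, \F_{(m-1)/n}\Bigr]\right].
\]
This exhibits $L_n(A,X)$ as \emph{affine} (hence both convex and concave) in the collection of random variables $(\Delta_1,\dots,\Delta_n)$ for each fixed $X$. On Nature's side, the cleaner route is to lift to a space of probability measures: identify Nature's strategy with the law $\mu$ of the random vector $S_n(X)=(S_{n,0},\dots,S_{n,n})$ together with the interpolation convention. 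The constraint set corresponding to $\X^n_{C,\zb}$ — equivalently, after a reduction, a set of laws satisfying \DVC\ and \ZC\ — can be taken to be a convex, weak-$*$ compact set of probability measures on a compact subset of $\R_{+}^{n+1}$: compactness of the sample space comes precisely from \ZC, which confines each ratio, hence each $S_{n,m}$, to a bounded interval, and the constraints \eqref{eq:varbound}/\eqref{eq:dvcbound} are preserved under weak limits because they are expectation inequalities against bounded continuous (in fact polynomial) test functions. With this identification, $L_n(A,\mu)$ is affine in $\mu$ as well, and continuity of $g$ together with boundedness of the integrand gives weak-$*$ continuity in $\mu$.

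With both sets convex, $L_n$ bilinear-affine (so in particular quasi-convex/quasi-concave and continuous in each variable separately), and Nature's set weak-$*$ compact, Sion's minimax theorem applies directly and yields the claimed equality of $\inf\sup$ and $\sup\inf$. The phrase ``for sufficiently large $n$'' in the main proof suggests the authors only need the $\le$ direction robustly — and indeed $\inf\sup\ge\sup\inf$ is automatic — but Sion gives both at once once the hypotheses hold, which they do for every $n$ (the constraint sets are nonempty for all $n$ since GBM-like discretizations satisfy \DVC\ and, being continuous, their ratios can be truncated into \ZC).

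\textbf{The main obstacle} I anticipate is not any single inequality but the bookkeeping to make the reduction to a finite-dimensional / measure-theoretic setting fully rigorous: namely, showing that restricting attention to the realized investments $(\Delta_1,\dots,\Delta_n)$ loses no generality for the investor (this is clear since $L_n$ depends on $A$ only through these variables), and that the set of achievable laws $\mu$ for Nature, carved out by \CVC\ at \emph{all} pairs $s\le t$ versus only the discrete \DVC\ pairs, maps onto a genuinely convex and closed set after we pass to $S_n(X)$ — here one uses the observation already noted in the excerpt that $X\in\X_C$ implies $S_n(X)\in\S^n_D$, so $S_n(\X^n_{C,\zb})\subseteq\S^n_{D,\zb}$, and conversely every $S_n\in\S^n_{D,\zb}$ is realized by its log-linear interpolation, which one checks still satisfies \CVC\ on the coarse grid (this is exactly the ``\DVC\ is weaker than \CVC'' remark). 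Once that correspondence is pinned down, convexity and compactness are routine and Sion closes the argument.
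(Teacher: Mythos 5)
Your high-level plan (invoke Sion) matches the paper's, but the paper applies Sion's theorem \emph{round-by-round} to the nested $\inf\sup$'s, whereas you propose to apply it \emph{once} to a globally-reformulated game over laws $\mu$. The paper unwinds the game into the chain $\inf_{\Delta_1}\sup_{T_{n,1}}\E_1[\cdots\inf_{\Delta_n}\sup_{T_{n,n}}\E_n[\cdots]]$, and at each level swaps a single $\inf_{\Delta_m}\sup_{\text{law of }T_{n,m}}$, where the objective (conditioned on the past) is linear in the \emph{number} $\Delta_m\in\R$ and linear in a distribution on the compact interval $[-\zeta_n,\zeta_n]$. After all $n$ swaps the authors still need a final observation — that $\Delta_m$ only multiplies $T_{n,m}$ and so gains nothing from peeking at future $T_{n,m'}$'s — to collapse the alternating $\sup\inf\cdots\sup\inf$ into a single $\sup\inf$. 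Your global formulation would subsume that last step automatically, which is a genuine simplification in spirit.

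However, your verification of Sion's hypotheses in the global formulation has a gap you should not gloss over: the claimed weak-$*$ continuity (or even upper semi-continuity) of $\mu\mapsto L_n(A,\mu)$ is false as argued. You write that ``continuity of $g$ together with boundedness of the integrand gives weak-$*$ continuity in $\mu$,'' but the integrand is \emph{not} bounded. While $T_{n,m}\in[-\zeta_n,\zeta_n]$ is confined, $\Delta_m$ is a measurable function of the history with range all of $\R$ — the model deliberately allows unbounded bets, and this unboundedness is used crucially in Lemma~\ref{lem:mtg} to force Nature onto martingales. Thus the term $\E_\mu[\Delta_m\,T_{n,m}]$ integrates a possibly unbounded, sign-changing, merely-measurable function of the path, and weak-$*$ convergence of $\mu_k\to\mu$ gives you nothing about such integrals. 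To close the gap you would need to either (a) argue you may restrict to $\Delta_m$ continuous and bounded without changing either the $\inf\sup$ or the $\sup\inf$ value (nontrivial, and not addressed in your writeup), or (b) abandon the one-shot application and condition round-by-round as the paper does, where $\Delta_m$ is a single real number after conditioning on $\F_{n,m-1}$ and the integrand $-T_{n,m}\Delta_m + (\text{continuation})$ is bounded in $T_{n,m}$. Your discussion of the \DVC/\CVC correspondence and of weak-$*$ closedness of the constraint set is fine and matches what the paper implicitly uses; the continuity-in-$\mu$ step is the one that needs repair.
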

This lemma is proved in the appendix, but it follows essentially from the minimax theorem of Sion \cite{sion1958general}.

\newcommand{\eqnspacer}{\phantom{\sup_{T_{n,n}}\hspace{-15pt}}}

The second lemma states that the optimal strategy for nature is a martingale, and furthermore, that the strategy of the investor does not matter.

\begin{lemma}\label{lem:mtg}
  \[
  \sup_{S_n \in \S^n_{D,\zb}} \inf_{A\in\A_n} L_n(A,S_n) =
  \sup_{S_n \in \S^n_{D,\zb,\mg}} L_n(S_n),
  \]
\end{lemma}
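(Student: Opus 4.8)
The plan is to prove the two inequalities separately. For the "$\geq$" direction, note that $\S^n_{D,\zb,\mg} \subseteq \S^n_{D,\zb}$, and for any $S_n \in \S^n_{D,\zb,\mg}$ we claim $\inf_{A\in\A_n} L_n(A,S_n) = L_n(S_n)$. Indeed, since $\Delta_m$ is $\F_{n,m-1}$-measurable and $S_n$ is a martingale, the tower property gives
\[
\E\left[\left(\tfrac{S_{n,m}}{S_{n,m-1}} - 1\right)\Delta_m\right]
= \E\left[\Delta_m \, \E\left[\tfrac{S_{n,m}}{S_{n,m-1}} - 1 \,\middle|\, \F_{n,m-1}\right]\right] = 0
\]
for every $m$, so the entire trading sum vanishes in expectation regardless of $A$, whence $L_n(A,S_n) = L_n(S_n)$ identically in $A$. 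Taking the infimum over $A$ and then the supremum over $S_n \in \S^n_{D,\zb,\mg}$ yields the lower bound.

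For the harder "$\leq$" direction, the idea is a Doob-decomposition / change-of-measure argument showing that any non-martingale $S_n$ is suboptimal because its drift gives the investor a costless profit opportunity that Nature cannot afford. Given an arbitrary $S_n \in \S^n_{D,\zb}$, write the conditional increments as $\E[S_{n,m} - S_{n,m-1} \mid \F_{n,m-1}] = S_{n,m-1}\mu_m$ for some $\F_{n,m-1}$-measurable drift $\mu_m$. Consider the investor strategy $\Delta_m = c_m \cdot S_{n,m-1}\cdot\mathrm{sign}(\mu_m)$ for a large constant $c_m$; this earns $\E[c_m S_{n,m-1}\,|\mu_m|\,\mathrm{sign}(\mu_m)^2] \cdot (\text{positive})$ in expectation at round $m$, i.e. a profit proportional to $\E[S_{n,m-1}|\mu_m|]$, while the option payoff $\E[g(S_{n,n})]$ is bounded (via $g$ being $L$-Lipschitz and \ZC\ keeping $S_{n,n}$ in a compact set). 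Hence if any $\mu_m \neq 0$ on a positive-probability event, the investor drives $L_n(A,S_n)$ to $-\infty$ by scaling $c_m$, so $\inf_A L_n(A,S_n) = -\infty$ and such $S_n$ contributes nothing to the outer supremum. Therefore the supremum over $\S^n_{D,\zb}$ is attained (in the relevant sense) only at $S_n$ with all $\mu_m = 0$ a.s., i.e. martingales, giving $\sup_{\S^n_{D,\zb}} \inf_A L_n(A,S_n) \leq \sup_{\S^n_{D,\zb,\mg}} L_n(S_n)$.

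The main obstacle is making the "drive to $-\infty$" argument fully rigorous, in particular verifying that scaling $\Delta_m$ does not run into any hidden constraint (the investor's strategy set $\A_n$ is genuinely unconstrained in the magnitude of $\Delta_m$, which the paper flags explicitly as an assumption) and that the option-payoff term is uniformly bounded over all $S_n \in \S^n_{D,\zb}$, so the negative profit term truly dominates. One must also handle the measurability of $\mathrm{sign}(\mu_m)$ and confirm that the improving strategy at round $m$ does not degrade performance at other rounds — which again follows from the tower property, since each round's contribution depends only on $\Delta_m$ and the single increment $S_{n,m}/S_{n,m-1} - 1$ conditioned on $\F_{n,m-1}$. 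A cleaner packaging, which I would adopt if the above gets unwieldy, is to argue one round at a time by backward induction on $m = n, n-1, \dots, 1$: at each stage, either the conditional drift vanishes or the investor can extract unbounded profit, so the optimal Nature play is forced to be a martingale increment at every step.
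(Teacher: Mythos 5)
Your proposal is correct and captures essentially the same argument as the paper: the ``$\geq$'' direction uses the tower property to show that for a martingale $S_n$ the trading sum vanishes in expectation regardless of $A$, and the ``$\leq$'' direction shows that any nonzero conditional drift lets the investor scale $\Delta_m$ to drive the loss to $-\infty$, forcing the optimal Nature play to be a martingale. The paper packages this via an explicit ``unwinding'' of the game into nested $\sup\inf\E$'s rather than the two-inequality form you use, but the mathematical content is the same.
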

\begin{proof}
  Defining $T_{n,m} = \frac{S_{m,n}}{S_{n,m-1}} - 1$, we can ``unwind'' the game round-by-round:
  \begin{multline} \label{eq:unwound}
    \sup_{S_n \in \S^n_{D,\zb}} \inf_{A\in\A_n} L_n(A,S_n)
    \quad = \quad
    \sup_{T_{n,1}} \inf_{\Delta_1} \Eop_{S_{n,1}} \left[ -T_{n,1} \Delta_1 \;+\;
      \sup_{T_{n,2}} \inf_{\Delta_2} \Eop_{S_{n,2}} \left[
        - T_{n,2}\Delta_2  \eqnspacer \right.\right.  \\
      + \cdots +
      \left.\left. \sup_{T_{n,n}} \inf_{\Delta_n} \Eop_{S_{n,n}} \left[
        -T_{n,n} \Delta_n + g(S_{n,n}) \eqnspacer
        \right] \cdots \right] \right],
  \end{multline}
  where for all $m$, $\E_{S_{n,m}}[ \,\cdot\, ]$ should be read as $\E_{S_{n,m}}[ \,\cdot\, |S_{n,m-1}]$.  Suppose $S_n$ was such that $\E_{S_{n,m}}[T_{n,m}\,|\, S_{n,m-1}] = a \neq 0$ for some $m$; then we see that 
\[
  \inf_{\Delta_m\in\reals} \Eop_{S_{n,m}}[-\Delta_m T_{n,m} \;|\; S_{n,m-1}] = \inf_{\Delta_m\in\reals} -a\Delta_m = -\infty.
\]
From~\eqref{eq:unwound}, it now follows that $\inf_{A\in\A_n} L_n(A,S_n) = -\infty$ for such an $S_n$.  Thus, to ensure $L_n(A,S_n) > -\infty$, the adversary must set $\E[T_{n,m} | S_{n,m-1}] = 0$ for each $m$, meaning $S_n$ must be a martingale sequence by definition of $T_{n,m}$.  Hence,
\begin{equation*}
  \sup_{S_n \in \S^n_{D,\zb}} \inf_{A\in\A_n} L_n(A,S_n) =
  \sup_{S_n \in \S^n_{D,\zb,\mg}} \inf_{A\in\A_n} L_n(A,S_n).
\end{equation*}
Furthermore, since $S_n$ is a martingale, we see from~\eqref{eq:unwound} that the investor's actions $\Delta_m$ are irrelevant. In particular, we can write
\begin{equation}
  \sup_{S_n \in \S^n_{D,\zb,\mg}} \inf_{A\in\A_n} L_n(A,S_n) = \sup_{S_n \in \S^n_{D,\zb,\mg}} L_n(S_n),\label{eq:player-irrelevant}
\end{equation}
which concludes the proof.
\end{proof}

The third lemma states that the supremum of the objective function is achieved by a stochastic process with maximal variance.

\begin{lemma}\label{lem:maxvar}
For sufficiently large $n$, there exists $S_n^\ast \in \S^n_{D=,\zb,\mg}$ such that
\begin{equation*}
L_n(S_n^\ast) = \sup_{S_n \in \S^n_{D,\zb,\mg}} L_n(S_n).
\end{equation*}
\end{lemma}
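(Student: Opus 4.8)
The plan is to prove this in two stages: first show that the supremum over $\S^n_{D,\zb,\mg}$ is actually attained by \emph{some} martingale, and second show that any maximizer can be pushed to an element of $\S^n_{D=,\zb,\mg}$, i.e.\ one that saturates the variance constraint at every step. For the attainment, I would set up a compactness argument. Observe that $L_n(S_n) = \E[g(S_{n,n})]$ depends only on the distribution of $S_{n,n}$; moreover, because of \ZC\ the ratios $S_{n,m+1}/S_{n,m}$ all lie in the compact interval $[1-\zeta_n, 1+\zeta_n]$, so $S_{n,n}$ takes values in a fixed compact subset of $\R_+$, and the relevant laws live in a tight (hence weakly sequentially compact) family of probability measures on that compact set. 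Since $g$ is continuous and bounded on this compact set, $L_n$ is weakly continuous; so taking a maximizing sequence of martingales, extracting a weakly convergent subsequence of the joint laws of $(S_{n,0},\dots,S_{n,n})$, and checking that the martingale property, \DVC, and \ZC\ all pass to the weak limit (the first two are expressed via conditional expectations, which one handles by testing against bounded continuous functions of the past, using that everything is bounded), yields a maximizer $\hat S_n \in \S^n_{D,\zb,\mg}$.

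Next I would argue that variance can be increased ``for free,'' so that the maximizer can be taken in $\S^n_{D=,\zb,\mg}$. The key convexity observation: $g$ is convex, so for the \emph{final} step, conditionally on $\F_{n,n-1}$, replacing the conditional law of $S_{n,n}$ by a mean-preserving spread (which only increases $\E[g(S_{n,n})\mid\F_{n,n-1}]$ by Jensen/convex order) while staying within \ZC\ lets us inflate the conditional variance up to the \DVC\ bound $(\exp(c/n)-1)S_{n,n-1}^2$; here one needs that the \ZC\ box $[1-\zeta_n,1+\zeta_n]$ is wide enough to accommodate this variance, which holds for large $n$ since $(\exp(c/n)-1) \le 2c/n$ while $\zeta_n^2 n/\log n \to \infty$ forces $\zeta_n^2 \gg 1/n$ — this is exactly where ``sufficiently large $n$'' is used. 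For the earlier steps $m < n$, I would run an inductive/backward argument: define $\Phi_m(S_{n,0},\dots,S_{n,m}) := \sup \E[g(S_{n,n})\mid \F_{n,m}]$ over admissible continuations, show by backward induction that each $\Phi_m$ is a convex function of $S_{n,m}$ (with the previous coordinates fixed), and then the same mean-preserving-spread argument applied to $\Phi_{m+1}$ shows it is optimal to saturate the variance at step $m$ as well. Patching these conditional optimizers together (measurably — here one invokes a measurable selection theorem, or simply notes that the saturating distribution can be chosen as an explicit two-point law depending measurably on $S_{n,m}$) produces the desired $S_n^\ast \in \S^n_{D=,\zb,\mg}$ with $L_n(S_n^\ast) = \sup_{\S^n_{D,\zb,\mg}} L_n(S_n)$.

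The main obstacle I anticipate is the rigorous handling of the backward-induction/convexity step together with its measurability: one must verify that $\Phi_m$ is genuinely convex in the last coordinate (not just that replacing a law by a spread helps pointwise), that the supremum defining $\Phi_m$ is attained and depends measurably on the history so the pieces glue into a bona fide stochastic process, and that saturating variance at step $m$ does not shrink the feasible set of continuations at steps $m+1,\dots,n$ in a way that hurts — it does not, because \DVC\ and \ZC\ at step $m+1$ only reference $S_{n,m}$, not its conditional variance. A cleaner route, which I would actually attempt first, sidesteps explicit measurable selection: show directly that for \emph{any} $S_n \in \S^n_{D,\zb,\mg}$ there is $S_n' \in \S^n_{D=,\zb,\mg}$ with $L_n(S_n') \ge L_n(S_n)$ by modifying one coordinate at a time from $m=n$ down to $m=1$, each modification a conditional mean-preserving spread to the \DVC\ boundary realized by an explicit (e.g.\ two-point or scaled) conditional law; combined with the attainment result from the compactness argument, this immediately gives the lemma. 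The compactness argument itself is routine given \ZC, so I expect the convex-order/mean-preserving-spread bookkeeping to be the only delicate part.
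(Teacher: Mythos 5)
Your two-stage plan --- a weak-compactness argument to obtain a maximizer in $\S^n_{D,\zb,\mg}$, then a modification step to push it into $\S^n_{D=,\zb,\mg}$ --- matches the structure of the paper's proof, and the compactness stage is essentially the same (the paper simply asserts compactness of $\S^n_{D,\zb,\mg}$ in the weak topology, using \ZC\ exactly as you do). You also correctly identify where ``sufficiently large $n$'' enters, namely to ensure $\exp(c/n)-1 < \zeta_n^2$.

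The gap is in the modification step. The paper does not \emph{replace} the conditional law of a slack coordinate $\hat T_{n,m^*}$; it \emph{mixes} it with the extreme two-point law: let $A$ be an independent event with $\P(A) = \alpha := (v^* - v)/(\zeta_n^2 - v)$, where $v^* = \exp(c/n)-1$ and $v = \E[\hat T_{n,m^*}^2\mid\F_{n,m^*-1}]$, let $Z$ be $\pm\zeta_n$ equiprobably, and set $T = \hat T_{n,m^*}\mathbf{1}_{\bar A} + Z\mathbf{1}_A$. This \emph{is} a mean-preserving spread of $\hat T_{n,m^*}$ (the law of $Z$ dominates every mean-zero law on $[-\zeta_n,\zeta_n]$ in the convex order, hence so does the mixture), it trivially stays within \ZC, and the mixing weight is tuned so \DVC\ holds with equality. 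By contrast, your parenthetical suggestions do not produce mean-preserving spreads: a two-point law at $\pm\sqrt{\exp(c/n)-1}$ has strictly larger variance than $\hat T_{n,m^*}$ but need not dominate it in the convex order --- e.g.\ take $\hat T_{n,m^*}$ supported on $\{-\zeta_n,0,\zeta_n\}$ with small mass at $\pm\zeta_n$ and the convex test function $\phi(x)=\max(0,|x|-\sqrt{\exp(c/n)-1})$, under which the replacement strictly \emph{decreases} the expectation --- and scaling $\hat T_{n,m^*}$ by $\lambda>1$ can push it outside $[-\zeta_n,\zeta_n]$ when it already has mass near the endpoints. So as stated the modification could lower the objective. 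Separately, the paper sidesteps the backward-induction/measurable-selection difficulties you flag by a minimality trick: among maximizers, take $\hat S_n$ minimizing the number of slack \DVC\ indices; a single mixing modification strictly reduces this count while not decreasing $L_n$, contradicting minimality. This is cleaner than constructing value functions $\Phi_m$ and gluing conditional optimizers, and it also avoids having to re-verify admissibility of the continuation at later steps, since only one coordinate is altered at a time with the filtration held fixed.
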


The key to this lemma is that maximization of a convex function always occurs at the boundary. The proof requires more work since the maximization is executed over a non-compact space. The full proof is in the appendix.

We now show that the optimal strategy for Nature converges to the geometric Brownian motion.

\begin{theorem}\label{Thm:Convergence}
For any sequence $(S_n^\ast, n \in \N)$ with $S_n^\ast \in \S^n_{D=,\zb,\mg}$,
\begin{equation*}
X(S_n^*) \convto{d} G
\end{equation*}
where $G$ is the geometric Brownian motion defined in~\eqref{eq:gbmdef}.
\end{theorem}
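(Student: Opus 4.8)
\textbf{Proof proposal for Theorem~\ref{Thm:Convergence}.}

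The plan is to establish convergence in distribution in $\C[0,1]$ by combining convergence of finite-dimensional distributions with tightness, and to obtain both through a martingale central limit theorem. First I would work in log-space: since $S_n^\ast$ satisfies \ZC, the increments $T_{n,m} = S_{n,m}^\ast/S_{n,m-1}^\ast - 1$ are bounded by $\zeta_n \to 0$, and the map $x \mapsto \log(1+x)$ is smooth near $0$, so $\log S_{n,m}^\ast - \log S_{n,m-1}^\ast = T_{n,m} - \tfrac12 T_{n,m}^2 + O(\zeta_n^3)$. Because $S_n^\ast \in \S^n_{D=,\zb,\mg}$, we have exactly $\E[T_{n,m} \mid \F_{n,m-1}] = 0$ and $\E[T_{n,m}^2 \mid \F_{n,m-1}] = (e^{c/n} - 1)(S_{n,m-1}^\ast/S_{n,m-1}^\ast)^2 \cdot (\cdots)$; more carefully, \DVC\ with equality gives $\E[(S_{n,m+1}^\ast - S_{n,m}^\ast)^2 \mid \F_{n,m}] = (e^{c/n}-1) (S_{n,m}^\ast)^2$, i.e.\ $\E[T_{n,m+1}^2 \mid \F_{n,m}] = e^{c/n} - 1 = c/n + O(1/n^2)$, a \emph{deterministic} quantity. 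This is the crucial structural fact: the conditional variance of each log-increment is essentially $c/n$, independent of the path.

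Next I would define the centered martingale $M_{n,m} = \sum_{k=1}^m \big(\log(1+T_{n,k}) + \tfrac12(e^{c/n}-1)\big)$ (or more simply $\sum_{k=1}^m \big(\log S_{n,k}^\ast - \log S_{n,k-1}^\ast\big) + \tfrac{cm}{2n}$, up to lower-order corrections), so that $\log X(S_n^\ast)(m/n) = M_{n,m} - \tfrac{cm}{2n} + (\text{error})$. I would then verify the hypotheses of the Lindeberg–Feller martingale functional CLT (as in Theorem~\ref{Thm:Convergence}'s stated reliance on Lindeberg–Feller): (i) the sum of conditional variances $\sum_{k=1}^{\lfloor nt \rfloor} \E[(\Delta M_{n,k})^2 \mid \F_{n,k-1}]$ converges in probability (in fact deterministically, by the previous paragraph) to $ct$; and (ii) the conditional Lindeberg condition, which follows immediately from the hard bound $|\Delta M_{n,k}| \leq \zeta_n + O(\zeta_n^2) \to 0$ — each increment is uniformly asymptotically negligible, so the Lindeberg sum vanishes for any fixed threshold once $n$ is large. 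These two conditions give $M_{n, \lfloor n \cdot \rfloor} \convto{d} \sqrt{c}\, B(\cdot)$ in $\C[0,1]$. Since the interpolation $X(S_n^\ast)$ is linear in log-space between the grid points and the mesh of the grid shrinks, the piecewise-linear interpolant has the same limit; hence $\log X(S_n^\ast)(\cdot) \convto{d} \sqrt{c}\, B(\cdot) - \tfrac{c(\cdot)}{2}$, and applying the continuous map $\exp(\cdot)$ together with the continuous mapping theorem yields $X(S_n^\ast) \convto{d} G$.

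A few technical points need care. The error terms from the Taylor expansion of $\log(1+T_{n,m})$ must be controlled uniformly: the cubic remainder is $O(\zeta_n^3)$ per step and there are $n$ steps, so the accumulated error is $O(n\zeta_n^3) = O(\zeta_n \cdot n\zeta_n^2)$; here I would need $n\zeta_n^2$ to not blow up too fast relative to $\zeta_n \to 0$ — actually $n\zeta_n^3 \to 0$ requires $\zeta_n = o(n^{-1/3})$, which is compatible with (though not implied by) the paper's choice $\zeta_n = n^{-1/2+\delta}$ for $\delta$ small; more robustly one argues that $\sum_m \E[|T_{n,m}|^3 \mid \F_{n,m-1}] \leq \zeta_n \sum_m \E[T_{n,m}^2 \mid \F_{n,m-1}] = \zeta_n \cdot n(e^{c/n}-1) \to 0$, so the drift correction and the remainder vanish in $L^1$ and hence in probability — this is the clean way to handle it. Similarly I should confirm $\sum_m \big(\E[(\log(1+T_{n,m}))^2 \mid \F_{n,m-1}] - \E[T_{n,m}^2 \mid \F_{n,m-1}]\big) \to 0$, again using $|\log(1+x) - x| \leq x^2$ for small $x$ and the third-moment bound.

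The main obstacle I anticipate is not the CLT mechanics — those are essentially forced by the bounded increments and the exact variance identity — but rather passing cleanly from convergence of the \emph{grid values} $(M_{n,0}, M_{n,1}, \dots, M_{n,n})$ to convergence of the \emph{interpolated process} $X(S_n^\ast) \in \C[0,1]$ in the sup norm, i.e.\ establishing tightness in $\C[0,1]$ and showing the linear interpolation does not distort the limit. Here I would invoke a standard interpolation lemma: if the embedded random walk converges in the Skorokhod/uniform sense and the maximal jump $\max_m |\Delta M_{n,m}| \to 0$ in probability (which holds, since it is bounded by $\zeta_n + O(\zeta_n^2)$ deterministically), then the polygonal interpolant converges to the same continuous limit; tightness follows from the martingale maximal inequality applied to increments over $[s,t]$, whose conditional variance is controlled by $c(t-s) + o(1)$ uniformly. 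Finally, one must check that $\exp$ is continuous as a map $\C[0,1] \to \C[0,1]$ on the relevant (sup-norm-bounded, by \ZC) subsets, which is routine, so the continuous mapping theorem applies and delivers $X(S_n^\ast) \convto{d} G$.
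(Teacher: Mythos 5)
Your proposal is correct and follows essentially the same route as the paper: work in log-space, decompose $\log S^\ast_{n,m} + cm/(2n)$ into a martingale plus a vanishing drift, invoke the martingale functional Lindeberg--Feller theorem (asymptotic negligibility from the hard bound $|T_{n,m}| \le \zeta_n$, conditional variance $\to ct$ from \DVC\ with equality, Taylor remainders controlled by $\sum_m \E[|T_{n,m}|^3 \mid \F_{n,m-1}] \le \zeta_n\, n(e^{c/n}-1) \to 0$), and finish with the continuous mapping theorem. The only cosmetic difference is that the paper isolates an exact predictable compensator $\beta_{n,m}$ so that the centered process is a genuine martingale (then kills $\beta_{n,(n\cdot)}$ via Slutsky), whereas you carry an approximate martingale plus an $L^1$-negligible correction; the estimates involved are the same either way.
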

\begin{proof}
For each $n \in \N$ and $0 \leq m \leq n$ define
\begin{equation*}
W_{n,m} = \log S_{n,m}^\ast + \frac{cm}{2n},
\end{equation*}
and let $W_{n,(n \cdot)} \in \C[0,1]$ denote the linear interpolation of the values $(W_{n,m}, 0 \leq m \leq n)$. That is, for $t \in [0,1]$ of the form $t = (m + \alpha)/n$ with $m \in \{0,1,\dots,n-1\}$ and $0 \leq \alpha \leq 1$, we define
\begin{equation*}
W_{n,(nt)} = (1-\alpha) \: W_{n,m} + \alpha \: W_{n,m+1} = \log X(S_n^\ast)(t) + \frac{ct}{2}.
\end{equation*}
It suffices to show that $W_{n,(n \cdot)} \convto{d} \sqrt{c}B$ where $B$ is the standard Brownian motion, for then we have
\begin{equation*}
\log X(S_n^\ast) = \left(W_{n,(n \cdot)} - \frac{ct}{2} \colon t \in [0,1] \right) \convto{d} \left( \sqrt{c} B(t) - \frac{ct}{2} \colon t \in [0,1] \right),
\end{equation*}
and thus $X(S_n^\ast) \convto{d} G$ by the continuous mapping theorem.

Throughout, let $\F_{n,m} = \sigma(S_{n,0}^\ast, \dots, S_{n,m}^\ast)$ and $T_{n,m} = \frac{S_{n,m}^\ast}{S_{n,m-1}^\ast}-1$. Let $\beta_{n,0} = 0$ and $\beta_{n,m} \in \F_{n,m-1}$ be a predictable sequence such that $M_{n,m} = W_{n,m} + \beta_{n,m}$ is a martingale sequence with respect to $(\F_{n,m})$, and let $\beta_{n,(n \cdot)}, M_{n,(n \cdot)} \in \mathcal{C}[0,1]$ be the linear interpolations of $(\beta_{n,m})$ and $(M_{n,m})$, respectively. Our approach is to show that
\begin{equation}\label{Eq:Requirement}
M_{n,(n \cdot)} \convto{d} \sqrt{c} B \quad \text{ and } \quad \beta_{n,(n \cdot)} \convto{p} \mathbf{0}
\end{equation}
where $\mathbf{0}$ is the zero function in $\mathcal{C}[0,1]$. By Slutsky's theorem~\cite[Theorems~18.10 and~18.11]{van2000asymptotic}, these would imply the desired result $W_{n,(n \cdot)} = M_{n,(n \cdot)}-\beta_{n,(n \cdot)} \convto{d} \sqrt{c} B$.

To show $M_{n,(n \cdot)} \convto{d} \sqrt{c} B$ we appeal to the Lindeberg--Feller theorem for martingales~\cite[Theorem~7.3]{durrett2004probability}, and to show $\beta_{n,(n \cdot)} \convto{p} \mathbf{0}$ we use Taylor approximation to bound the value of $\max_{1 \leq m \leq n} |\beta_{n,m}|$. The detailed proof is in Appendix~\ref{Sec:ProofConvergence}.
\end{proof}

Furthermore, we also have the convergence of the payoff values.

\begin{lemma} \label{lem:upper-bound}
For any sequence $(S_n^\ast, n \in \N)$ with $S_n^\ast \in \S^n_{D=,\zb,\mg}$,
  \[
    \lim_{n \to \infty} L_n(S_n^\ast) = \beta.
  \]
\end{lemma}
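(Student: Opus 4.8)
\textbf{Proof plan for Lemma~\ref{lem:upper-bound}.}
The goal is to show $L_n(S_n^\ast) = \E[g(S_{n,n}^\ast)] \to \beta = \E[g(G(1))]$. The natural route is to combine the distributional convergence already established in Theorem~\ref{Thm:Convergence} with a uniform integrability argument to upgrade convergence in distribution to convergence of expectations. First I would invoke Theorem~\ref{Thm:Convergence} to get $X(S_n^\ast) \convto{d} G$ in $\C[0,1]$ with the supremum norm. Since evaluation at $t=1$, $X \mapsto X(1)$, is continuous on $\C[0,1]$, the continuous mapping theorem gives $S_{n,n}^\ast = X(S_n^\ast)(1) \convto{d} G(1)$ as real-valued random variables. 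Because $g$ is continuous (being $L$-Lipschitz), a second application of the continuous mapping theorem yields $g(S_{n,n}^\ast) \convto{d} g(G(1))$.

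To pass from convergence in distribution to convergence of means, it suffices to show the family $\{g(S_{n,n}^\ast) : n \in \N\}$ is uniformly integrable. Using $L$-Lipschitzness and $g \geq 0$, we have $0 \le g(S_{n,n}^\ast) \le g(1) + L\,|S_{n,n}^\ast - 1| \le g(1) + L\,S_{n,n}^\ast + L$, so it is enough to bound a slightly-more-than-first moment of $S_{n,n}^\ast$ uniformly in $n$; I would aim to show $\sup_n \E[(S_{n,n}^\ast)^2] < \infty$, which gives uniform integrability of $S_{n,n}^\ast$ and hence of $g(S_{n,n}^\ast)$. To get the second-moment bound, recall $S_n^\ast$ is a martingale satisfying \DVC\ with equality, so by the orthogonality of martingale increments,
\begin{equation*}
\E[(S_{n,n}^\ast)^2] = 1 + \sum_{m=0}^{n-1} \E\big[(S_{n,m+1}^\ast - S_{n,m}^\ast)^2\big] = 1 + (\exp(c/n)-1)\sum_{m=0}^{n-1}\E[(S_{n,m}^\ast)^2].
\end{equation*}
Writing $a_{n,m} = \E[(S_{n,m}^\ast)^2]$, this is the recursion $a_{n,m+1} = a_{n,m}(1 + (\exp(c/n)-1)) = a_{n,m}\exp(c/n)$ with $a_{n,0}=1$, so $a_{n,n} = \exp(c) = \E[G(1)^2]$ exactly, and in particular it is bounded uniformly in $n$. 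Then uniform integrability follows from the de~la~Vall\'ee-Poussin criterion (bounded second moments), and combined with $g(S_{n,n}^\ast) \convto{d} g(G(1))$ we conclude $\E[g(S_{n,n}^\ast)] \to \E[g(G(1))] = \beta$.

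The main obstacle is the uniform-integrability step, and in particular making sure the martingale-plus-\DVC-with-equality structure really does control the second moment uniformly; the computation above shows it collapses neatly, but one should be careful that $S_{n,m}^\ast \in \R_+$ and that the increments are genuinely square-integrable so that the telescoping of $\E[(S_{n,m+1}^\ast)^2] - \E[(S_{n,m}^\ast)^2] = \E[(S_{n,m+1}^\ast - S_{n,m}^\ast)^2]$ is valid (this uses the martingale property to kill the cross term $\E[S_{n,m}^\ast(S_{n,m+1}^\ast - S_{n,m}^\ast)] = 0$). The \ZC\ constraint guarantees each increment is bounded, so square-integrability at each finite stage is automatic, and an induction on $m$ keeps everything finite. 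Everything else — the two applications of the continuous mapping theorem and the Lipschitz domination of $g$ — is routine.
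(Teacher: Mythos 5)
Your proposal is correct and rests on the same two pillars as the paper's proof: (i) the weak convergence $S_{n,n}^\ast \convto{d} G(1)$ coming from Theorem~\ref{Thm:Convergence} together with continuous mapping, and (ii) the uniform second-moment identity $\E[(S_{n,n}^\ast)^2] = \exp(c)$, which you derive by exactly the same martingale-increment-orthogonality plus \DVC-with-equality recursion that appears in the paper's appendix. The only difference is presentational: you close the argument by citing the de~la~Vall\'ee-Poussin criterion and the standard fact that uniform integrability plus weak convergence upgrades to convergence of means, whereas the paper spells this out by hand --- truncating $g$ to $g_M = \min(g,M)$, using bounded weak convergence for $g_M$, and then using Cauchy--Schwarz together with a Markov-inequality tail bound (both powered by the same second-moment computation) to establish that $\E[g_M(S_{n,n}^\ast)] \to \E[g(S_{n,n}^\ast)]$ uniformly in $n$, which licenses the limit interchange. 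Your route is slightly more compressed since it cites the UI machinery by name, but the mathematical content is identical; either presentation is correct.
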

\begin{proof}
From Theorem~\ref{Thm:Convergence} we know that $X(S_n^\ast) \convto{d} G$, so in particular, by looking at the value at $t = 1$ we also have $S_{n,n}^\ast \convto{d} G(1)$. Given $M > 0$, let $g_M(x) = \min(g(x), M)$. Note that $g_M$ is a bounded continuous function, so the convergence $S_{n,n}^\ast \convto{d} G(1)$ gives us
\begin{equation*}
\lim_{n \to \infty} \E[g_M(S_{n,n}^\ast)] = \E[g_M(G(1))] \quad \text{ for each } M > 0.
\end{equation*}
Moreover, since $g_M \uparrow g$ pointwise by the monotone convergence theorem we also know that
\begin{equation*}
\lim_{M \to \infty} \E[g_M(G(1))] = \E[g(G(1))].
\end{equation*}
We claim that
\begin{equation*}
\lim_{M \to \infty} \E[g_M(S_{n,n}^\ast)] = \E[g(S_{n,n}^\ast)] \quad \text{ uniformly in } n;
\end{equation*}
we prove this claim in Appendix~\ref{sec:upper-bound}. This uniform convergence allows us to interchange the order of the limit operations, giving us the desired result
\begin{align*}
\lim_{n \to \infty} \E[g(S_{n,n}^\ast)]
&= \lim_{n \to \infty} \lim_{M \to \infty} \E[g_M(S_{n,n}^\ast)]
= \lim_{M \to \infty} \lim_{n \to \infty} \E[g_M(S_{n,n}^\ast)]
= \lim_{M \to \infty} \E[g_M(G(1))]
= \E[g(G(1))].
\end{align*}

\end{proof}

The last lemma states that we have a matching lower bound for the limit of the game value.

\begin{lemma}\label{lem:lower-bound}
\begin{equation*}
  \beta \; \leq \; \lim_{n \to \infty} \inf_{A\in\A_n} \sup_{X \in \X^n_{C,\zb}} L_n(A,X).
\end{equation*}
\end{lemma}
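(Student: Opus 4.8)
The plan is to exhibit, for each $n$, a single near-optimal randomized price path for Nature and show that the investor cannot gain against it asymptotically, so that the game value is at least $\beta - o(1)$. The natural candidate is (a discretized, truncated version of) geometric Brownian motion itself: let $G$ be as in~\eqref{eq:gbmdef} and consider the discrete sequence $G_{n,m} = G(m/n)$, which is a martingale satisfying \CVC\ (hence \DVC) with equality. Since $G$ violates \ZC\ only on an event of vanishing probability (the ``$\eqref{eq:probconv}$'' fact alluded to in the paper), I would modify $G$ on that bad event — e.g.\ freeze the path the first time a ratio $|G_{n,m+1}/G_{n,m}-1|$ would exceed $\zeta_n$ — to obtain a path $\tilde X_n \in \X^n_{C,\zb}$ that agrees with $G$ with probability $1 - o(1)$.

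First I would lower-bound $\inf_{A} \sup_{X} L_n(A,X)$ by $\inf_{A} L_n(A, \tilde X_n)$, since $\tilde X_n$ is a legal move for Nature. Second, because $\tilde X_n$ is a martingale with respect to its own filtration, the same ``unwinding'' computation as in Lemma~\ref{lem:mtg} (equation~\eqref{eq:unwound}) shows that every investor increment contributes $\E[(\tilde X_n(m/n)/\tilde X_n((m-1)/n) - 1)\Delta_m] = \E[\Delta_m \, \E[\,\cdot\,|\F_{(m-1)/n}]] = 0$, so $\inf_A L_n(A,\tilde X_n) = \E[g(\tilde X_n(1))]$. Third, I would show $\E[g(\tilde X_n(1))] \to \beta$: we have $\tilde X_n(1) = G(1)$ on an event of probability $1-o(1)$, and on the complementary bad event we use that $g$ is $L$-Lipschitz with $g(1)$ fixed together with a moment bound on the frozen path to control $\E[|g(\tilde X_n(1)) - g(G(1))|\, \mathbf{1}_{\text{bad}}]$; this needs the bad event to be small in probability \emph{and} the payoff there not too large, which is where the truncation-by-freezing helps (the frozen value stays bounded by $G$ at the stopping time, which has bounded moments uniformly in $n$). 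Combining the three steps and taking $\liminf_{n}$ yields $\beta \le \liminf_n \inf_A \sup_X L_n(A,X)$, and since the limit is already known to exist from the preceding argument (or one may keep it as a $\liminf$) the lemma follows.

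The main obstacle I expect is the third step: quantifying the probability and the size of the event on which the discretized GBM violates \ZC, and checking that the repair does not perturb the terminal payoff expectation. One must verify $\sum_{m} \P(|G(m/n)/G((m-1)/n) - 1| > \zeta_n)$ (or the probability that this ever happens) is $o(1)$ — this is a Gaussian/lognormal tail estimate over $n$ increments, each of standard deviation $\sim \sqrt{c/n}$, against a threshold $\zeta_n$ with $n\zeta_n^2/\log n > 16c$ eventually, so a union bound with the Gaussian tail $e^{-n\zeta_n^2/(2c)} \le n^{-8}$ kills the $n$ terms comfortably. The only subtlety is ensuring the modified path still satisfies \CVC\ exactly or with slack: freezing the path (making later increments identically zero) trivially preserves the conditional-variance inequality, so \CVC\ and the martingale property are retained, and \ZC\ holds by construction. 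Everything else (Lipschitz bound, uniform integrability of $G(1)$ and of the stopped process) is routine, so I would relegate the detailed tail computation to an appendix paralleling Lemma~\ref{lem:upper-bound}.
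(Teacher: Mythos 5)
Your overall strategy matches the paper's: exhibit for Nature a path law in $\X^n_{C,\zb}$ that closely tracks $G$, invoke the martingale property to kill the investor's hedging via the unwinding of Lemma~\ref{lem:mtg}, and then show $\E[g(\cdot)]\to\beta$ from the Lipschitz bound together with a Gaussian tail estimate that the \ZC-violation probability vanishes; your tail-exponent computation from $n\zeta_n^2 > 16c\log n$ is essentially the paper's Appendix~\ref{app:lower-bound} calculation.

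The gap is in the ``freeze'' modification: it destroys the martingale property, which your second step relies on. Freezing \emph{just before} the first violating ratio requires a time that anticipates the next increment and is therefore not adapted, so the stopped process is not a martingale with respect to the filtration the investor observes. Freezing \emph{at or after} the violation (replacing the offending ratio and all later ones by $1$) gives, at an unfrozen step $m$, a new ratio with conditional mean
\begin{equation*}
\E\big[R_m \mathbf{1}\{|R_m-1|\le\zeta_n\} + \mathbf{1}\{|R_m-1|>\zeta_n\}\big]
= 1 - \E\big[(R_m-1)\,\mathbf{1}\{|R_m-1|>\zeta_n\}\big],
\end{equation*}
where $R_m = G(m/n)/G((m-1)/n)$ is lognormal with $\E[R_m]=1$; the right-skew of the lognormal makes the tail term strictly positive, so the process drifts downward. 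Since the investor may take unbounded positions, this drift is exploitable and $\inf_{A}L_n(A,\tilde X_n)=-\infty$, making the bound vacuous. The paper uses a different repair: it replaces the \emph{entire} path by the constant $\mathbf{1}$ whenever $G$ violates \ZC\ anywhere, so no increment is ever individually truncated, and argues $G_n\in\X^n_{C,\zb}$ and martingality via a mixture argument. If you wish to keep a local modification you would have to recenter the truncated increments to restore zero conditional mean and then re-verify \ZC\ and \DVC; the whole-path switch is the cleaner route and is what the paper does.
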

\begin{proof}
For each $n \in \N$, define a stochastic process $G_n \colon [0,1] \to \R_+$ by setting $G_n = G$ if the geometric Brownian motion $G$ does not violate \ZC, and $G_n = \mathbf{1}$ (the constant function $1$) otherwise. Observe that $G_n$ is continuous and satisfies \ZC\ by construction. Furthermore, $G_n$ satisfies \CVC\ because $G_n$ is a mixture of two components, namely $G$ and the constant function $\mathbf{1}$, each of which satisfies \CVC. This shows that $G_n \in \X^n_{C,\zb}$. Similarly, $G_n$ is a martingale, so $S_n(G_n)$ is also a martingale. Thus by~\eqref{eq:player-irrelevant} we have
\begin{equation*}
\inf_{A\in\A_n} \sup_{X \in \X^n_{C,\zb}} L_n(A,X)
\geq \inf_{A\in\A_n} L_n(A,G_n)
= L_n(G_n)
= \E[g(G_n(1))].
\end{equation*}
We now show that $\lim_{n \to \infty} \E[g(G_n(1))] = \E[g(G(1))]$, which will give us the desired result:
\begin{equation*}
\liminf_{n \to \infty} \inf_{A\in\A_n} \sup_{X \in \X^n_{C,\zb}} L_n(A,X)
\geq \lim_{n \to \infty} \E[g(G_n(1))]
= \E[g(G(1))] = \beta.
\end{equation*}

Recall that $\E[G(1)] = 1$, so $\E[(G(1)-1)^2] = \Var(G(1)) = \exp(c)-1$. We claim that for sufficiently large $n$,
\begin{equation}\label{eq:probconv}
\P(G \text{ does not violate \ZC})  \geq \left(1-\frac{1}{n^2}\right)^n \to 1 \quad \text{ as } n \to \infty.
\end{equation}
We prove this claim in Appendix~\ref{app:lower-bound}. Then using Lipschitz property of $g$ and the Cauchy-Schwarz inequality, we obtain
\begin{equation*}
\begin{split}
\big|\E\big[g(G(1))\big]-\E\big[g(G_n(1))\big]\big|
&= \big|\E\big[(g(G(1))-g(1)) \:\cdot \mathbf{1}\{G \text{ violates \ZC}\} \big] \big| \\
&\leq \E\big[|g(G(1))-g(1)| \cdot \mathbf{1}\{G \text{ violates \ZC}\} \big] \\
&\leq L \: \E\big[|G(1)-1| \cdot \mathbf{1}\{G \text{ violates \ZC}\} \big] \\
&\leq L \: \E[(G(1)-1)^2]^{1/2} \: \P(G \text{ violates \ZC})^{1/2} \\
&\leq L \: \big(\exp(c)-1\big)^{1/2} \: \left(1-\left(1-\frac{1}{n^2}\right)^n\right)^{1/2} \\
&\to 0 \quad \text{ as } n \to \infty.
\end{split}
\end{equation*}
\end{proof}

\bibliographystyle{plain} 
\bibliography{option_pricing}

\begin{thebibliography}{10}

\bibitem{abernethy2009stochastic}
J.~Abernethy, A.~Agarwal, P.L. Bartlett, and A.~Rakhlin.
\newblock {A stochastic view of optimal regret through minimax duality}.
\newblock In {\em Proceedings of the 22nd Annual Conference on Learning
  Theory}, 2009.

\bibitem{black1973pricing}
F.~Black and M.~Scholes.
\newblock The pricing of options and corporate liabilities.
\newblock {\em The journal of political economy}, pages 637--654, 1973.

\bibitem{cesa-bianchi_prediction_2006}
Nicolo {Cesa-Bianchi} and Gabor Lugosi.
\newblock {\em Prediction, Learning, and Games}.
\newblock Cambridge University Press, 2006.

\bibitem{demarzo2006online}
P.~DeMarzo, I.~Kremer, and Y.~Mansour.
\newblock Online trading algorithms and robust option pricing.
\newblock In {\em Proceedings of the thirty-eighth annual ACM symposium on
  Theory of computing}, pages 477--486. ACM, 2006.

\bibitem{durrett2004probability}
R.~Durrett.
\newblock {\em Probability: theory and examples (third edition)}.
\newblock Cambridge University Press, 2004.

\bibitem{freund1995desicion}
Y.~Freund and R.~Schapire.
\newblock {A desicion-theoretic generalization of on-line learning and an
  application to boosting}.
\newblock In {\em Computational learning theory}, pages 23--37. Springer, 1995.

\bibitem{gofer2011pricing}
E.~Gofer and Y.~Mansour.
\newblock Pricing exotic derivatives using regret minimization.
\newblock {\em Algorithmic Game Theory}, pages 266--277, 2011.

\bibitem{gofer2011regret}
E.~Gofer and Y.~Mansour.
\newblock Regret minimization algorithms for pricing lookback options.
\newblock In {\em Algorithmic Learning Theory}, pages 234--248. Springer, 2011.

\bibitem{LitWar94}
Nick Littlestone and Manfred~K. Warmuth.
\newblock The weighted majority algorithm.
\newblock {\em Information and Computation}, 108(2):212--261, 1994.

\bibitem{von_neumann_theory_1947}
J.~Von Neumann, O.~Morgenstern, H.~W Kuhn, and A.~Rubinstein.
\newblock {\em Theory of games and economic behavior}.
\newblock Princeton university press Princeton, {NJ}, 1947.

\bibitem{shafer2001probability}
G.~Shafer and V.~Vovk.
\newblock {\em Probability and finance: it's only a game!}, volume 373.
\newblock Wiley-Interscience, 2001.

\bibitem{sion1958general}
M.~Sion.
\newblock {On general minimax theorems}.
\newblock {\em Pacific J. Math}, 8(1):171--176, 1958.

\bibitem{van2000asymptotic}
A.W. Van~der Vaart.
\newblock {\em Asymptotic statistics}.
\newblock Cambridge University Press, 2000.

\bibitem{vovk1995pricing}
V.~Vovk.
\newblock Pricing european options without probability.
\newblock Technical report, Technical Report CLRC-TR-99-4, Computer Learning
  Research Centre, Royal Holloway, University of London, 1995.

\end{thebibliography}

\appendix

\section{Proof of Lemma~\ref{lem:minimax}}

Once we write out the objective $L_n(A,X)$ explicitly, we see that the single $\inf \sup$ can be broken down into a sequence of nested $\sup \inf$'s as follows. Let $T_{n,m}$ be the random variable $\frac{X(m/n)}{X((m-1)/n)} - 1$ for $m=1, \ldots n$. Then we have
  \begin{multline} \label{eq:infsups}
  \inf_{A\in\A_n} \sup_{S_n \in \S^n_{D,\zb}} L_n(A,S_n)
  \quad = \quad
  \inf_{\Delta_1} \sup_{T_{n,1}} \E_1 \left[ -T_{n,1} \Delta_1 \;+\;
    \inf_{\Delta_2} \sup_{T_{n,2}} \E_2 \left[
      - T_{n,2}\Delta_2 \eqnspacer \right. \right.  \\
    + \cdots +
    \left.\left. \inf_{\Delta_n} \sup_{T_{n,n}} \E_n \left[
      -T_{n,n} \Delta_n + g\left(\prod_{m=1}^n (1 + T_{n,m})\right) \eqnspacer
      \right] \cdots \right] \right],
\end{multline}
where $\E_m$ should be interpreted to mean the expectation conditioned on the filtration of the random path $X$ up to time $\frac{m-1}{n}$.

We now recall a simplified version of Sion's minimax theorem \cite{sion1958general}, a generalization of Von Neumann's minimax theorem \cite{von_neumann_theory_1947}. Assume we are given a compact convex set $\Lambda$ and a convex set $\Omega$, each a subset of a linear topological space, and we have a function $f : \Omega \times \Lambda \to \reals$ continuous in each input, convex in the first input and concave in the second. Then it holds that 
\[
  \inf_{\omega \in \Omega} \sup_{\lambda \in \Lambda} f(\omega,\lambda) = 
  \sup_{\lambda \in \Lambda} \inf_{\omega \in \Omega} f(\omega,\lambda).
\]
This can be applied to each of the nested $\inf\sup$'s in \eqref{eq:infsups} recursively, where we substitute $\Omega = \reals$ and $\Lambda = \Delta([-\zeta_n, \zeta_n])$, the set of distributions on the interval which is compact in the weak topology\footnote{Notice that the use of \ZC\ is critical here, as we wouldn't have compactness otherwise.}. The objective function of the $m$-th nested $\inf\sup$ is clearly linear in $\Delta_m$, and it is also linear in the distribution on $T_{n,m}$ as we are simply taking an expectation over this distribution. We then have
\begin{multline} \label{eq:infsups}
  \inf_{A\in\A_n} \sup_{S_n \in \S^n_{D,\zb}} L_n(A,S_n)
  \quad = \quad
  \sup_{T_{n,1}} \inf_{\Delta_1} \E_1 \left[ -T_{n,1} \Delta_1 \;+\;
    \sup_{T_{n,2}} \inf_{\Delta_2} \E_2 \left[
      - T_{n,2}\Delta_2 \eqnspacer \right. \right.  \\
    + \cdots +
    \left.\left. \sup_{T_{n,n}} \inf_{\Delta_n} \E_n \left[
      -T_{n,n} \Delta_n + g\left(\prod_{m=1}^n (1 + T_{n,m})\right) \eqnspacer
      \right] \cdots \right] \right],
\end{multline}
Of course, since we are considering $\Delta_m$ as a function of the history up to time $(m-1)$, Nature may as well solve for the optimal randomized price path determined by $T_{n,1}, \ldots, T_{n,n}$ in advance and announce this to the investor. Since $\Delta_m$ does not interact with any price fluctuations beyond the $m$-th one, the choice of $\Delta_m$ may as well be made with knowledge of $T_{n,m+1}, \ldots, T_{n,n}$. In other words, the sequence of $\sup$'s can be gathered together, and we obtain the desired result.

\section{Proof of Lemma~\ref{lem:maxvar}}

We will assume without loss of generality that the filtration $\F_{n,m}$ is fixed for the set $\S^n_{D,\zb,\mg}$.  Taking the sup over the weak topology, we see that the argsup is nonempty since the set $\S^n_{D,\zb,\mg}$ is compact.  Define
\[
  M(S_n) := \{m \colon \E[T_{n,m}^2 \mid \F_{n,m-1}]<\exp(c/n)-1\},
\]
where $T_{n,m}$ is defined as usual.  In other words, $M(S_n)$ is set of steps $m$ where \DVC\ is slack for $S_n$.  Now let $\hat S_n$ be an element of the argsup such that $|M(\hat S_n)|$ is minimal.  Assume for a contradiction that $M(\hat S_n)$ is nonempty, and let $m^* \in M(\hat S_n)$.

We wish to construct some $S^*_n$ from $\hat S_n$ by modifying $\hat T_{n,m^*}$.  Note that since the filtration $\F_{n,m}$ is fixed, for each $m$ the constraints \DVC, \ZC, and \mg\ on $\hat T_{n,m}$ are independent of the values of $\hat T_{n,m^*}$, so to ensure $S^*_n \in \S^n_{D,\zb,\mg}$, we need only maintain the three constraints for the modified $\hat T_{n,m^*}$.

For brevity let $\hat T = \hat T_{n,m^*}$.  Denote $v^* = \exp(c/n) - 1$ and $v = \E[\hat T^2 \mid \F_{n,m^*-1}]$.  Note that $v < v^*$ by assumption, and for sufficiently large $n$, $v^* < \zeta_n^2$ by the definition of $\zb$.  Henceforth we will assume $n$ is large enough for the latter inequality.  Let $A$ be an independent event with $\P(A) = \alpha$, where $\alpha := (v^*-v)/(\zeta_n^2-v)$, and let $Z$ be the random variable which is $\zeta_n$ and $-\zeta_n$ each with probability $1/2$.  Finally, set $T = \hat T \mathbf{1}_{\bar A} + Z \mathbf{1}_A$ and define $S^*_n$ by $T^*_{n,m} = \hat T_{n,m}$ for $m\neq m^*$ and $T^*_{n,m^*} = T$.  Note that this $T$ satisfies \ZC\ and \mg\ trivially, and satisfies \DVC\ with equality:
\[
  \E[T^2 \mid \F_{n,m^\ast-1}] = (1-\alpha) \E[\hat T^2 \mid \F_{n,m^\ast-1}] + \alpha \zeta_n^2 = v + \alpha (\zeta_n^2 - v) = v^*,
\]
and hence $S^*_n \in \S^n_{D,\zb,\mg}$ and in fact $|M(S^*_n)| = |M(\hat S_n)| - 1$.  We will show below that $\E[g(\hat S_{n,n})] \leq \E[g(S^*_{n,n})]$, meaning $S^*_n$ is also in the argsup, thus condtradicting the minimality of $|M(\hat S_n)|$.  Hence, for sufficiently large $n$, we can select some $S^*_n$ in the argsup such that $M(S^*_n) = \emptyset$, which completes the proof.

We now show $\E[g(\hat S_{n,n})] \leq \E[g(S^*_{n,n})]$.  Observe that $g\left(\prod_{m=1}^n (1+t_m)\right)$ is convex in each $t_m$ (fixing the others).  Thus, we see that conditioned on $\{\hat T_{n,m} \mid m \neq m^\ast\}$,
\[
  f(t) = g\left((1+t)\prod_{m\neq m^*} (1+\hat T_{n,m})\right) - at
\]
is convex in $t$, where $a$ is chosen such that $f(\zeta_n) = f(-\zeta_n)$.  Now we have
\[
  \E[f(T)] - \E[f(\hat T)] = \alpha\left(\E[f(Z)] - \E[f(\hat T)]\right) = \alpha \left( f(\zeta_n) - \E[f(\hat T)] \right) \geq 0,
\]
since $f(t) \leq f(\zeta_n)$ for all $t \in [-\zeta_n, \zeta_n]$ by convexity of $f$.  Hence, denoting $\{\hat T_{n,m} \mid m \neq m^*\}$ by $\hat T_{n,-m^*}$, we have
\[
  \E[g(\hat S_{n,n})] = \E\big[ \E[f(\hat T) \mid \hat T_{n,-m^*}] \big] \leq \E\big[\E[f(T) \mid \hat T_{n,-m^*}]\big] = \E[g(S^*_{n,n})],
\]
where we use the fact that the linear term $aT$ in the definition of $f$ does not change the expectation, as $\E[T] = \E[\hat T] = 0$.

\section{Proof of Theorem~\ref{Thm:Convergence}}\label{Sec:ProofConvergence}

For simplicity we will write $S_n$ in place of $S_n^\ast$. Note that since $S_n \in \S^n_{D=,\zb,\mg}$ we know that for each $n \in \N$ and $1 \leq m \leq n$,
\begin{equation}\label{Eq:SecondMoment}
\E[T_{n,m} \mid \F_{n,m-1}] = 0, \quad \E[T_{n,m}^2 \mid \F_{n,m-1}] = \exp(c/n)-1, \quad \text{ and } \quad |T_{n,m}| \leq \zeta_n \quad \text{ a.s.}
\end{equation}

\paragraph{Showing $\beta_{n,(n \cdot)} \convto{p} \mathbf{0}$.}
Observe that for each $n \in \N$ and $1 \leq m \leq n$,
\begin{equation*}
\begin{split}
\beta_{n,m} &= -\E[W_{n,m}-W_{n,m-1} \mid \F_{n,m-1}] + \beta_{n,m-1} \\
&= -\E\left[ \log\left(\frac{S_{n,m}}{S_{n,m-1}}\right) \;\Big|\; \F_{n,m-1} \right] - \frac{c}{2n} + \beta_{n,m-1} \\
&= -\E\left[ \log\left(1 + T_{n,m}\right) \mid \F_{n,m-1} \right] - \frac{c}{2n} + \beta_{n,m-1}.
\end{split}
\end{equation*}
Now write
\begin{equation*}
\beta_{n,m} = \sum_{k=1}^m (\beta_{n,k}-\beta_{n,k-1})
= -\sum_{k=1}^m \left(\E[\log(1+T_{n,k}) \mid \F_{n,k-1}] + \frac{c}{2n} \right),
\end{equation*}
so by Lemma~\ref{Lem:LogS1}, for sufficiently large $n$ and for all $1 \leq m \leq n$ we have
\begin{equation*}
|\beta_{n,m}|
\leq \sum_{k=1}^m \left|\E[\log(1+T_{n,k}) \mid \F_{n,k-1}] + \frac{c}{2n} \right|
\leq \sum_{k=1}^m \left(\frac{2c \: \zeta_n}{n} + \frac{c^2}{n^2}\right)
\leq 2c \: \zeta_n + \frac{c^2}{n}.
\end{equation*}
Thus, since $\beta_{n,(n \cdot)}$ is a linear interpolation of $(\beta_{n,m}, 0 \leq m \leq n)$, for sufficiently large $n$ we have
\begin{equation*}
\max_{0 \leq t \leq 1} |\beta_{n,(nt)}| = \max_{1 \leq m \leq n} |\beta_{n,m}| \leq 2c \: \zeta_n + \frac{c^2}{n} \to 0 \quad \text{ as } n \to \infty.
\end{equation*}
This shows that in fact $\beta_{n,(n \cdot)} \to \mathbf{0}$ a.s.

\paragraph{Showing $M_{n,(n \cdot)} \convto{d} \sqrt{c}B$.}
For $1 \leq m \leq n$ let $Y_{n,m}$ be the martingale differences,
\begin{equation*}
\begin{split}
Y_{n,m} &= M_{n,m}-M_{n,m-1} \\
&= W_{n,m}-W_{n,m-1} -\E[W_{n,m}-W_{n,m-1} \mid \F_{n,m-1}] \\
&= \log(1+T_{n,m}) - \E[\log(1+T_{n,m}) \mid \F_{n,m-1}].
\end{split}
\end{equation*}
Let $V_{n,0} = 0$ and for $1 \leq m \leq n$, let $V_{n,m}$ be the partial sum of the conditional variance,
\begin{equation*}
V_{n,m} = \sum_{k=1}^m \E[Y_{n,k}^2 \mid \F_{n,k-1}] = \sum_{k=1}^m \Var(\log(1+T_{n,k}) \mid \F_{n,k-1}).
\end{equation*}
By the Lindeberg-Feller theorem for martingales~\cite[Theorem~7.3]{durrett2004probability}, to prove $M_{n,(n \cdot)} \convto{d} \sqrt{c}B$ it suffices to show that
\begin{enumerate}
  \item for all $\varepsilon > 0$, $\sum_{m=1}^n \E[Y_{n,m}^2 \: \mathbf{1}\{|Y_{n,m}| > \varepsilon\} \mid \F_{n,m-1}] \convto{p} 0$, and
  \item $V_{n, \lfloor nt \rfloor} \convto{p} ct$ for all $0 \leq t \leq 1$.
\end{enumerate}
The first condition is easy to satisfy using Lemma~\ref{Lem:LogS1}. Indeed, given $\varepsilon > 0$, from Lemma~\ref{Lem:LogS1} we see that for sufficiently large $n$ and for all $1 \leq m \leq n$,
\begin{equation*}
\big|\E[\log(1+T_{n,m}) \mid \F_{n,m-1}] \big| \leq \frac{2c \: \zeta_n}{n} + \frac{c^2}{n^2} \leq \frac{\varepsilon}{2} \quad \text{ a.s.}
\end{equation*}
Moreover, from the assumption that $|T_{n,m}| \leq \zeta_n \to 0$, for sufficiently large $n$ and for all $1 \leq m \leq n$ we also have
\begin{equation*}
|\log(1+T_{n,m})| \leq \zeta_n \leq \frac{\varepsilon}{2} \quad \text{ a.s.}
\end{equation*}
Thus for sufficiently large $n$ and for all $1 \leq m \leq n$,
\begin{equation*}
|Y_{n,m}| \leq |\log(1+T_{n,m})| + \big|\E[\log(1+T_{n,m}) \mid \F_{n,m-1}] \big| \leq \varepsilon \quad \text{ a.s.},
\end{equation*}
which implies the asymptotic negligibility condition,
\begin{equation*}
\sum_{m=1}^n \E[Y_{n,m}^2 \: \mathbf{1}\{|Y_{n,m}| > \varepsilon\} \mid \F_{n,m-1}] = 0 \quad \text{ a.s.\ for sufficiently large } n.
\end{equation*}
For the second condition, let $0 \leq t \leq 1$ be given. Then by Lemma~\ref{Lem:X}, for sufficiently large $n$ we have
\begin{equation*}
\begin{split}
\left| V_{n, \lfloor nt \rfloor} - ct \right|
&= \left| \sum_{m=1}^{\lfloor nt \rfloor} \Var(\log(1+T_{n,m}) \mid \F_{n,m-1}) - ct \right| \\
&\leq \sum_{m=1}^{\lfloor nt \rfloor} \left| \Var(\log(1+T_{n,m}) \mid \F_{n,m-1}) - \frac{c}{n} \right| + \left| \frac{c \lfloor nt \rfloor}{n} - ct \right| \\
&\leq \sum_{m=1}^{\lfloor nt \rfloor} \left(\frac{4c \: \zeta_n}{n} + \frac{3c^2}{n^2}\right) + \left(\frac{c}{n} (nt+1) - ct \right) \\
&\leq 4c \: \zeta_n + \frac{3c^2}{n} + \frac{c}{n} \\
&\to 0 \quad \text{ as } n \to \infty.
\end{split}
\end{equation*}
Thus $V_{n, \lfloor nt \rfloor} \to ct$ a.s.\ for each $0 \leq t \leq 1$.

To complete the proof of Theorem~\ref{Thm:Convergence} we establish the following lemmas.

\begin{lemma}\label{Lem:LogS1}
For sufficiently large $n$ and for all $1 \leq m \leq n$ we have
\begin{equation*}
\left|\E\left[\log(1+T_{n,m}) \mid \F_{n,m-1} \right] + \frac{c}{2n} \right| \leq \frac{2c \: \zeta_n}{n} + \frac{c^2}{n^2} \quad \text{ a.s.}
\end{equation*}
\end{lemma}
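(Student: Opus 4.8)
The plan is a second-order Taylor expansion of $\log(1+x)$ together with the exact conditional moments recorded in~\eqref{Eq:SecondMoment}. Write $\log(1+x) = x - \frac{x^2}{2} + r(x)$, where $r(x) = \sum_{k \geq 3} \frac{(-1)^{k+1}}{k} x^k$ is the tail of the power series, valid for $|x| < 1$. For $|x| \leq \frac12$ one has the crude bound $|r(x)| \leq \frac13 \sum_{k \geq 3} |x|^k = \frac{|x|^3}{3(1-|x|)} \leq \frac{2}{3} |x|^3$, which is all we will need.

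First I would take conditional expectations and use that $S_n^\ast \in \S^n_{D=,\zb,\mg}$. By~\eqref{Eq:SecondMoment} we have $\E[T_{n,m} \mid \F_{n,m-1}] = 0$ and $\E[T_{n,m}^2 \mid \F_{n,m-1}] = \exp(c/n) - 1$, so
\[
  \E\left[\log(1+T_{n,m}) \mid \F_{n,m-1}\right] = -\tfrac12\big(\exp(c/n)-1\big) + \E\left[r(T_{n,m}) \mid \F_{n,m-1}\right].
\]
Adding $\frac{c}{2n}$ and applying the triangle inequality reduces the claim to bounding the two terms
\[
  \left|\E\left[\log(1+T_{n,m}) \mid \F_{n,m-1}\right] + \tfrac{c}{2n}\right|
  \leq \tfrac12\left|\exp(c/n) - 1 - \tfrac{c}{n}\right| + \E\left[|r(T_{n,m})| \;\big|\; \F_{n,m-1}\right].
\]

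Then I would bound each term separately. For the deterministic term, the elementary inequality $e^y - 1 - y \leq \frac{y^2}{2} e^y$ for $y \geq 0$ (which follows from $\frac{2}{k!} \leq \frac{1}{(k-2)!}$ for $k \geq 2$) gives, once $n$ is large enough that $e^{c/n} \leq 2$, that $\exp(c/n) - 1 - \frac{c}{n} \leq \frac{c^2}{n^2}$, hence this term is at most $\frac{c^2}{2n^2}$. For the remainder term, choose $n$ large enough that also $\zeta_n \leq \frac12$ and $\exp(c/n) - 1 \leq \frac{2c}{n}$; since $|T_{n,m}| \leq \zeta_n$ a.s.\ we get $|T_{n,m}|^3 \leq \zeta_n T_{n,m}^2$, so
\[
  \E\left[|r(T_{n,m})| \mid \F_{n,m-1}\right] \leq \tfrac23 \E\left[|T_{n,m}|^3 \mid \F_{n,m-1}\right] \leq \tfrac23 \zeta_n \E\left[T_{n,m}^2 \mid \F_{n,m-1}\right] = \tfrac23 \zeta_n\big(\exp(c/n)-1\big) \leq \tfrac{4c\,\zeta_n}{3n}.
\]
Combining, the left-hand side is at most $\frac{4c\,\zeta_n}{3n} + \frac{c^2}{2n^2} \leq \frac{2c\,\zeta_n}{n} + \frac{c^2}{n^2}$, which is the asserted bound. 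There is no genuine obstacle here: the only point needing slight care is selecting a single threshold $N$ past which $\zeta_n \leq \frac12$, $e^{c/n} \leq 2$, and $\exp(c/n) - 1 \leq \frac{2c}{n}$ hold simultaneously, which is immediate since $\zeta_n \to 0$ and $c/n \to 0$; everything else is bookkeeping of constants.
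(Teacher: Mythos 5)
Your proof is correct and takes essentially the same approach as the paper: second-order Taylor expansion of $\log(1+x)$, the moment identities from~\eqref{Eq:SecondMoment}, a bound on the remainder of the form (constant)$\cdot \zeta_n T_{n,m}^2$ before taking conditional expectation, and a separate elementary bound on the deterministic discrepancy $\exp(c/n)-1-c/n = O(1/n^2)$. The only cosmetic difference is that you control the Taylor remainder via the alternating power series tail $r(x)=\sum_{k\geq 3}\frac{(-1)^{k+1}}{k}x^k$ while the paper uses the Lagrange form $\frac{T_{n,m}^3}{3(1+\xi_{n,m})^3}$; both yield the same $\zeta_n T_{n,m}^2$-type estimate once $\zeta_n$ is small.
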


\begin{proof}
Observe that for sufficiently large $n$, say $n \geq N$, and for all $1 \leq m \leq n$, the following inequalities hold:
\begin{equation*}
\exp(c/n)-1 \leq \frac{2c}{n},
\quad \exp(c/n)-1-\frac{c}{n} \leq \frac{2c^2}{n^2},
\quad\text{ and }\quad 3(1-\zeta_n) \geq 1.
\end{equation*}
Throughout the rest of this proof, all inequalities hold for $n \geq N$ and uniformly for all $1 \leq m \leq n$. 

By the second-order Taylor expansion (with remainder) of the function $x \mapsto \log(1+x)$ around the point $x = 0$,
\begin{equation*}
\log(1+T_{n,m})-T_{n,m}+\frac{1}{2} T_{n,m}^2 = \frac{1}{3(1+\xi_{n,m})^3} T_{n,m}^3
\end{equation*}
where $\xi_{n,m}$ is some value between $0$ and $T_{n,m}$. Then, since $|T_{n,m}| \leq \zeta_n$,
\begin{equation*}
\begin{split}
\left|\log(1+T_{n,m}) - T_{n,m} + \frac{1}{2} T_{n,m}^2 \right|
&= \left|\frac{1}{3(1+\xi_{n,m})^3} T_{n,m}^3 \right|
\leq \frac{|T_{n,m}|^3}{3(1-\zeta_n)^3}
\leq \zeta_n \: T_{n,m}^2,
\end{split}
\end{equation*}
so
\begin{equation*}
\begin{split}
\E\Bigg[\Big|\log(1+T_{n,m}) - T_{n,m} + \frac{1}{2} T_{n,m}^2 \Big|  \;\Bigg|\; \F_{n,m-1} \Bigg]
\leq \zeta_n \: \E[T_{n,m}^2 \mid \F_{n,m-1}]
= \zeta_n \big( \exp(c/n)-1 \big)
\leq \frac{2c \: \zeta_n}{n}.
\end{split}
\end{equation*}
Therefore,
\begin{equation*}
\begin{split}
\left| \E[\log(1+T_{n,m}) \mid \F_{n,m-1}] + \frac{c}{2n} \right|
&= \Bigg| \E\Big[ \log(1+T_{n,m}) - T_{n,m} + \frac{1}{2} T_{n,m}^2 \;\Big|\; \F_{n,m-1} \Big] -\frac{1}{2}\Big(\exp(c/n)-1-\frac{c}{n}\Big) \Bigg| \\
&\leq \E\Bigg[ \Big|\log(1+T_{n,m}) - T_{n,m} + \frac{1}{2} T_{n,m}^2 \Big| \;\Bigg|\; \F_{n,m-1} \Bigg] + \frac{1}{2} \Bigg| \exp(c/n)-1-\frac{c}{n} \Bigg| \\
&\leq \frac{2c \: \zeta_n}{n} + \frac{c^2}{n^2}.
\end{split}
\end{equation*}
\end{proof}

\begin{lemma}\label{Lem:LogS2}
For sufficiently large $n$ and for all $1 \leq m \leq n$ we have
\begin{equation*}
\left|\E\left[\log^2(1+T_{n,m}) \mid \F_{n,m-1} \right] - \frac{c}{n} \right| \leq \frac{4c \: \zeta_n}{n} + \frac{2c^2}{n^2} \quad \text{ a.s.}
\end{equation*}
\end{lemma}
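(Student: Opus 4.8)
The plan is to follow exactly the route of Lemma~\ref{Lem:LogS1}: expand $\log(1+T_{n,m})$ to first order with Lagrange remainder, square, and take conditional expectations, using the facts recorded in~\eqref{Eq:SecondMoment} that $\E[T_{n,m}^2 \mid \F_{n,m-1}] = \exp(c/n)-1$ and $|T_{n,m}| \leq \zeta_n$ a.s. Concretely, I would write $\log(1+T_{n,m}) = T_{n,m} + R_{n,m}$ with $R_{n,m} = -T_{n,m}^2/\bigl(2(1+\xi_{n,m})^2\bigr)$ for some $\xi_{n,m}$ between $0$ and $T_{n,m}$, so that $|R_{n,m}| \leq T_{n,m}^2/\bigl(2(1-\zeta_n)^2\bigr)$.

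Squaring gives $\log^2(1+T_{n,m}) = T_{n,m}^2 + 2T_{n,m}R_{n,m} + R_{n,m}^2$, and since $|T_{n,m}| \leq \zeta_n \to 0$ the error term satisfies $\bigl|2T_{n,m}R_{n,m} + R_{n,m}^2\bigr| \leq \zeta_n T_{n,m}^2\bigl[(1-\zeta_n)^{-2} + \tfrac{\zeta_n}{4}(1-\zeta_n)^{-4}\bigr] \leq 2\zeta_n T_{n,m}^2$ for all sufficiently large $n$, uniformly in $m$, because the bracketed prefactor tends to $1$. Taking conditional expectations and using $\exp(c/n)-1 \leq 2c/n$ for large $n$ (one of the ``sufficiently large $n$'' inequalities in the proof of Lemma~\ref{Lem:LogS1}),
\[
\bigl| \E[\log^2(1+T_{n,m}) \mid \F_{n,m-1}] - \E[T_{n,m}^2 \mid \F_{n,m-1}] \bigr| \leq 2\zeta_n \, \E[T_{n,m}^2 \mid \F_{n,m-1}] = 2\zeta_n\bigl(\exp(c/n)-1\bigr) \leq \frac{4c\,\zeta_n}{n}.
\]

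Finally I would combine this, via the triangle inequality, with the elementary estimate $\bigl|\E[T_{n,m}^2 \mid \F_{n,m-1}] - c/n\bigr| = \exp(c/n)-1-c/n \leq 2c^2/n^2$ (valid for large $n$, exactly as invoked in the proof of Lemma~\ref{Lem:LogS1}), which yields the stated bound $4c\zeta_n/n + 2c^2/n^2$. I do not expect a genuine obstacle here: the argument is parallel to Lemma~\ref{Lem:LogS1}, and the only point requiring care is the now-familiar bookkeeping of which inequalities hold ``for sufficiently large $n$'' uniformly over $1 \leq m \leq n$ — which the \ZC\ constraint makes routine, since it forces $|\xi_{n,m}| \leq \zeta_n \to 0$ and hence keeps all the prefactors $(1-\zeta_n)^{-k}$ bounded and close to $1$.
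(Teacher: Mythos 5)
Your proposal is correct and follows essentially the same route as the paper: both reduce to the estimate $|\log^2(1+T_{n,m}) - T_{n,m}^2| \leq 2\zeta_n T_{n,m}^2$ (for large $n$, uniformly in $m$), then take conditional expectations using $\E[T_{n,m}^2 \mid \F_{n,m-1}] = \exp(c/n)-1 \leq 2c/n$ and finish with the triangle inequality and $\exp(c/n)-1-c/n \leq 2c^2/n^2$. The only cosmetic difference is how you obtain that intermediate estimate — you expand $\log(1+x)$ to first order with Lagrange remainder and then square, whereas the paper Taylor-expands $\log^2(1+x)$ directly to second order with its own Lagrange remainder $\frac{-3+2\log(1+\xi)}{3(1+\xi)^3}T_{n,m}^3$; both yield the same $2\zeta_n T_{n,m}^2$ bound since the relevant prefactors tend to $1$ as $\zeta_n \to 0$.
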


\begin{proof}
We follow a similar argument as in the proof of Lemma~\ref{Lem:LogS1}. Observe that since $\zeta_n \to 0$, for sufficiently large $n$, say $n \geq N$, the following inequalities hold:
\begin{equation*}
\exp(c/n)-1 \leq \frac{2c}{n},
\quad \exp(c/n)-1-\frac{c}{n} \leq \frac{2c^2}{n^2},
\quad\text{ and }\quad \frac{3 - 2\log(1-\zeta_n)}{3(1-\zeta_n)^3} \leq 2.
\end{equation*}
Throughout the rest of this proof, all inequalities hold for $n \geq N$ and uniformly for all $1 \leq m \leq n$.

Recall that from the second-order Taylor expansion of the function $x \mapsto \log^2(1+x)$ around the point $x = 0$,
\begin{equation*}
\log^2(1+T_{n,m}) - T_{n,m}^2 = \left(\frac{-3+2\log(1+\xi_{n,m})}{3(1+\xi_{n,m})^3}\right) \: T_{n,m}^3
\end{equation*}
where $\xi_{n,m}$ is some value between $0$ and $T_{n,m}$. Then, since $|T_{n,m}| \leq \zeta_n$,
\begin{equation*}
\begin{split}
\left|\log^2(1+T_{n,m}) - T_{n,m}^2 \right|
&= \left|\left(\frac{-3+2\log(1+\xi_{n,m})}{3(1+\xi_{n,m})^3}\right) \: T_{n,m}^3 \right| \\
&\leq \frac{3 - 2\log(1-|T_{n,m}|)}{3(1-|T_{n,m}|)^3} \: |T_{n,m}|^3 \\
&\leq \frac{3 - 2\log(1-\zeta_n)}{3(1-\zeta_n)^3} \: \zeta_n \: T_{n,m}^2 \\
&\leq 2 \zeta_n \: T_{n,m}^2.
\end{split}
\end{equation*}
Then
\begin{equation*}
\begin{split}
\E\Big[ \big| \log^2(1+T_{n,m}) - T_{n,m}^2 \big| \;\Big|\; \F_{n,m-1} \Big]
&\leq 2\zeta_n \: \E[T_{n,m}^2 \mid \F_{n,m-1}]
= 2\zeta_n \big( \exp(c/n)-1 \big) 
\leq \frac{4c \: \zeta_n}{n}.
\end{split}
\end{equation*}
Therefore,
\begin{equation*}
\begin{split}
\left|\E\left[\log^2(1+T_{n,m}) \mid \F_{n,m-1} \right] - \frac{c}{n} \right|
&= \left|\E\left[\log^2(1+T_{n,m})-T_{n,m}^2 \mid \F_{n,m-1} \right] + \Big(\exp(c/n)-1-\frac{c}{n} \Big) \right| \\
&\leq \E\Big[ \big| \log^2(1+T_{n,m}) - T_{n,m}^2 \big| \;\Big|\; \F_{n,m-1} \Big] + \left|\exp(c/n)-1-\frac{c}{n}\right| \\
&\leq \frac{4c \: \zeta_n}{n} + \frac{2c^2}{n^2}.
\end{split}
\end{equation*}
\end{proof}

\begin{lemma}\label{Lem:X}
For sufficiently large $n$ and for all $1 \leq m \leq n$ we have
\begin{equation*}
\left|\Var(\log(1+T_{n,m}) \mid \F_{n,m-1})-\frac{c}{n}\right| \leq \frac{4c \: \zeta_n}{n} + \frac{3c^2}{n^2} \quad \text{ a.s.}
\end{equation*}
\end{lemma}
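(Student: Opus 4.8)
The plan is to reduce this to the two preceding lemmas via the elementary identity $\Var(Z \mid \F_{n,m-1}) = \E[Z^2 \mid \F_{n,m-1}] - \left(\E[Z \mid \F_{n,m-1}]\right)^2$ applied with $Z = \log(1+T_{n,m})$. The second-moment term is controlled directly by Lemma~\ref{Lem:LogS2}, which gives $\left|\E[\log^2(1+T_{n,m}) \mid \F_{n,m-1}] - \frac{c}{n}\right| \leq \frac{4c\,\zeta_n}{n} + \frac{2c^2}{n^2}$ for large $n$, uniformly in $m$. So the only thing left is to show that the square of the conditional mean is a lower-order term, specifically at most $\frac{c^2}{n^2}$ once $n$ is large enough.

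First I would invoke Lemma~\ref{Lem:LogS1} to write $\left|\E[\log(1+T_{n,m}) \mid \F_{n,m-1}]\right| \leq \frac{c}{2n} + \frac{2c\,\zeta_n}{n} + \frac{c^2}{n^2} = \frac{c}{n}\left(\tfrac12 + 2\zeta_n + \tfrac{c}{n}\right)$ a.s., uniformly in $1 \leq m \leq n$. Since $\zeta_n \to 0$, there is an $N$ such that $\tfrac12 + 2\zeta_n + \tfrac{c}{n} \leq 1$ for all $n \geq N$, and hence $\left(\E[\log(1+T_{n,m}) \mid \F_{n,m-1}]\right)^2 \leq \frac{c^2}{n^2}$ a.s.\ for all such $n$ and all $m$.

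Combining these via the triangle inequality,
\begin{equation*}
\left|\Var(\log(1+T_{n,m}) \mid \F_{n,m-1}) - \frac{c}{n}\right|
\leq \left|\E[\log^2(1+T_{n,m}) \mid \F_{n,m-1}] - \frac{c}{n}\right| + \left(\E[\log(1+T_{n,m}) \mid \F_{n,m-1}]\right)^2
\leq \frac{4c\,\zeta_n}{n} + \frac{2c^2}{n^2} + \frac{c^2}{n^2},
\end{equation*}
which is exactly the claimed bound $\frac{4c\,\zeta_n}{n} + \frac{3c^2}{n^2}$. There is no real obstacle here: the lemma is a clean bookkeeping step that packages Lemmas~\ref{Lem:LogS1} and~\ref{Lem:LogS2} into a statement about the conditional variance, and the only mild subtlety is choosing $n$ large enough (depending only on the fixed sequence $\zb$, not on $m$) so that the squared-mean contribution is absorbed into the $\frac{c^2}{n^2}$ slack.
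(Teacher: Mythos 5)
Your proof is correct and takes the same route as the paper: decompose the conditional variance as second moment minus squared first moment, apply Lemma~\ref{Lem:LogS2} to the former and Lemma~\ref{Lem:LogS1} to the latter, and absorb the squared-mean term into a $c^2/n^2$ slack once $n$ is large. The only cosmetic difference is that you factor the first-moment bound as $\frac{c}{n}\bigl(\tfrac12+2\zeta_n+\tfrac{c}{n}\bigr)$ before squaring, whereas the paper squares the bound directly; both yield the same conclusion.
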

\begin{proof}
Writing
\begin{equation*}
\Var(\log(1+T_{n,m}) \mid \F_{n,m-1}) = \E[\log^2(1+T_{n,m}) \mid \F_{n,m-1}] - \E[\log(1+T_{n,m}) \mid \F_{n,m-1}]^2,
\end{equation*}
we can bound
\begin{equation*}
\begin{split}
\left|\Var(\log(1+T_{n,m}) \mid \F_{n,m-1})-\frac{c}{n}\right|
&\leq \left|\E[\log^2(1+T_{n,m}) \mid \F_{n,m-1}]-\frac{c}{n}\right| + \E[\log(1+T_{n,m}) \mid \F_{n,m-1}]^2.
\end{split}
\end{equation*}
For sufficiently large $n$ and for all $1 \leq m \leq n$, by Lemma~\ref{Lem:LogS2} the first term above is at most $4c \: \zeta_n/n + 2c^2/n^2$, while by Lemma~\ref{Lem:LogS1} the first term above is at most
\begin{equation*}
\left(\frac{c}{2n} + \frac{2c \: \zeta_n}{n} + \frac{c^2}{n^2} \right)^2 \leq \frac{c^2}{n^2}.
\end{equation*}
Thus for sufficiently large $n$ and for all $1 \leq m \leq n$,
\begin{equation*}
\left|\Var(\log(1+T_{n,m}) \mid \F_{n,m-1})-\frac{c}{n}\right|
\leq \frac{4c \: \zeta_n}{n} + \frac{2c^2}{n^2} + \frac{c^2}{n^2}
= \frac{4c \: \zeta_n}{n} + \frac{3c^2}{n^2},
\end{equation*}
as desired.
\end{proof}

\section{Proof of Lemma~\ref{lem:upper-bound}}
\label{sec:upper-bound}

In view of the proof sketch in Section~\ref{sec:main-result}, it suffices to show that
\begin{equation*}
\lim_{M \to \infty} \E[g_M(S_{n,n}^\ast)] = \E[g(S_{n,n}^\ast)] \quad \text{ uniformly in } n.
\end{equation*}
For simplicity we will write $S_n$ in place of $S_n^\ast$. Without loss of generality we may assume $g(0) = 0$, so the Lipschitz property of $g$ tells us that $g(x) \leq Lx$ for all $x \geq 0$.

Recall that since $S_n$ is a martingale we have $\E[S_{n,n}] = \E[S_{n,0}] = 1$ for all $n \in \N$. We now show that for each $n \in \N$,
\begin{equation}\label{eq:var-Snm}
\E[S_{n,m}^2] = \exp\left(\frac{cm}{n}\right) \quad \text{ for all } m = 1,\dots,n.
\end{equation}
We proceed by induction on $m$, for each fixed $n$. The base case $m = 1$ follows from the fact that $S_n \in \S^n_{D=,\zb,\mg}$, so $S_n$ satisfies \DVC\ with equality. Now assume the claim is true for some $1 \leq m < n$. For $m+1$,  by expanding the \DVC\ constraint
\begin{equation*}
\E[(S_{n,m+1}-S_{n,m})^2 \mid \F_{n,m}] = \left(\exp(c/n)-1\right) S_{n,m}^2
\end{equation*}
we get
\begin{equation*}
\E[S_{n,m+1}^2 \mid \F_{n,m}] = 2S_{n,m} \: \E[S_{n,m+1} \mid \F_{n,m}] - S_{n,m}^2 + \left(\exp(c/n)-1\right) S_{n,m}^2 = \exp(c/n) \: S_{n,m}^2.
\end{equation*}
Taking expectation on both sides and using the inductive hypothesis, we obtain
\begin{equation*}
\E[S_{n,m+1}^2] = \exp\left(\frac{c}{n} \right) \: \E[S_{n,m}^2] = \exp\left(\frac{c(m+1)}{n} \right),
\end{equation*}
which proves~\eqref{eq:var-Snm}. In particular, by plugging in $m = n$ to~\eqref{eq:var-Snm} we see that $\E[S_{n,n}^2] = \exp(c)$ for all $n \in \N$.

Now fix $1 \leq n \leq \infty$. By the Cauchy-Schwarz inequality, for $M > 0$ we have
\begin{equation*}
\begin{split}
\big| \E[g(S_{n,n})] - \E[g_M(S_{n,n})] \big|
&= \E[(g(S_{n,n})-M) \cdot \mathbf{1}\{g(S_{n,n}) > M\}] \\
&\leq \E[g(S_{n,n}) \cdot \mathbf{1}\{g(S_{n,n}) > M\}] \\
&\leq \E[g(S_{n,n})^2]^{1/2} \; \P(g(S_{n,n}) > M)^{1/2}
\end{split}
\end{equation*}
For the first factor, since $g(S_{n,n}) \leq LS_{n,n}$ we have that $\E[g(S_{n,n})^2] \leq L^2 \: \E[S_{n,n}^2] = L^2 \exp(c)$. Similarly, for the second factor, by Markov inequality we have
\begin{equation*}
\P(g(S_{n,n}) > M) \leq \P(S_{n,n} > M/L) \leq \frac{\E[S_{n,n}]}{M/L} = \frac{L}{M}.
\end{equation*}
Therefore,
\begin{equation*}
\big| \E[g(S_{n,n})] - \E[g_M(S_{n,n})] \big| \leq \frac{L^{3/2} \: \exp(c/2)}{M^{1/2}},
\end{equation*}
and since the bound is independent of $n$, this shows that $\lim_{M \to \infty} \E[g_M(S_{n,n})] = \E[g(S_{n,n})]$ uniformly in $n$, as desired.

\section{Proof of Lemma~\ref{lem:lower-bound}}
\label{app:lower-bound}

In view of the proof sketch in Section~\ref{sec:main-result}, it remains to show that for sufficiently large $n$,
\begin{equation*}
\P(G \text{ does not violate \ZC})  \geq \left(1-\frac{1}{n^2}\right)^n.
\end{equation*}

Recall that $G(t) = \exp(\sqrt{c}B(t) - ct/2)$ where $B$ is the standard Brownian motion. Thus for each $n \in \N$ and for all $0 \leq m \leq n-1$,
\begin{equation*}
\frac{G((m+1)/n)}{G(m/n)} = \exp\left(\sqrt{c}B\left(\frac{m+1}{n}\right) - \sqrt{c} B\left(\frac{m}{n}\right) - \frac{c}{2n} \right) = \exp\left(\sqrt{\frac{c}{n}} Z_m - \frac{c}{2n} \right)
\end{equation*}
where $Z_m = \sqrt{n}(B(\frac{m+1}{n}) - B(\frac{m}{n}))$ has $N(0,1)$ distribution and $Z_0,Z_1,\dots,Z_{n-1}$ are independent. Therefore,
\begin{equation*}
\begin{split}
\P(G \text{ does not violate \ZC})
&= \P\left(\max_{0 \leq m \leq n-1} \left|\frac{G((m+1)/n)}{G(m/n)}-1\right| \leq \zeta_n \right) \\
&= \P\left(\max_{0 \leq m \leq n-1} \left|\exp\left(\sqrt{\frac{c}{n}} Z_m \right)-1\right| \leq \zeta_n \right) \\
&= \P\left(\left|\exp\left(\sqrt{\frac{c}{n}} Z - \frac{c}{2n} \right)-1\right| \leq \zeta_n \right)^n
\end{split}
\end{equation*}
where $Z \sim N(0,1)$.

From the assumptions $\zeta_n \to 0$ and $\liminf_{n \to \infty} n \zeta_n^2/\log n > 16c$ we can choose $n$ large enough, say $n \geq N$, such that the following inequalities are true:
\begin{equation*}
\begin{split}
\log(1+\zeta_n) \geq \frac{\zeta_n}{2},
\quad\quad
\sqrt{\frac{n}{c}} \zeta_n \geq \sqrt{\frac{c}{n}},
\quad\quad
\frac{\sqrt{n} \: \zeta_n}{2 \: \sqrt{c}} \geq \sqrt{\frac{2}{\pi}}
\quad \text{ and }
\quad
\frac{n \: \zeta_n^2}{\log n} \geq 16c.
\end{split}
\end{equation*}
Throughout the remainder of this proof we suppose $n \geq N$. Observe that if we have $\exp\left(\sqrt{\frac{c}{n}} Z-\frac{c}{2n} \right)-1 > \zeta_n$ then
\begin{equation*}
Z > \frac{1}{2} \sqrt{\frac{c}{n}} + \sqrt{\frac{n}{c}} \: \log(1+\zeta_n)
> \frac{\sqrt{n} \: \zeta_n}{2\sqrt{c}},
\end{equation*}
and similarly, if $\exp\left(\sqrt{\frac{c}{n}} Z-\frac{c}{2n} \right)-1 < - \zeta_n$ then
\begin{equation*}
Z < \frac{1}{2} \sqrt{\frac{c}{n}} + \sqrt{\frac{n}{c}} \: \log(1-\zeta_n)
\leq \frac{1}{2} \sqrt{\frac{c}{n}} - \sqrt{\frac{n}{c}} \: \zeta_n
\leq -\frac{\sqrt{n} \: \zeta_n}{2\sqrt{c}}.
\end{equation*}
This shows that
\begin{equation*}
\P\left(\left|\exp\left(\sqrt{\frac{c}{n}} Z - \frac{c}{2n} \right)-1\right| > \zeta_n \right)
\leq \P\left(|Z| > \frac{\sqrt{n} \: \zeta_n}{2\sqrt{c}} \right)
= 2 \: \P\left(Z > \frac{\sqrt{n} \: \zeta_n}{2\sqrt{c}} \right).
\end{equation*}
By standard Gaussian tail bound~\cite[Theorem~1.4]{durrett2004probability},
\begin{equation*}
\P\left(Z > \frac{\sqrt{n} \: \zeta_n}{2\sqrt{c}} \right) \leq \frac{1}{\sqrt{2\pi}} \: \frac{2\sqrt{c}}{\sqrt{n} \: \zeta_n} \: \exp\left( -\frac{n\zeta_n^2}{8c} \right) \leq \frac{1}{2} \: \exp(-2\log n) = \frac{1}{2n^2}.
\end{equation*}
Thus
\begin{equation*}
\P(G \text{ does not violate \ZC}) = \left(1-2 \: \P\left(Z > \frac{\sqrt{n} \: \zeta_n}{2\sqrt{c}} \right) \right)^n \geq \left(1-\frac{1}{n^2}\right)^n.
\end{equation*}
\end{document}